\newcommand{\newsection}[1]{
\onlyEa{\vspace{-0.4cm}}
\section{#1}
\onlyEa{\vspace{-0.3cm}}
}
\newcommand{\newsubsection}[1]{
\onlyEa{\vspace{-0.3cm}}
\subsection{#1}
\onlyEa{\vspace{-0.2cm}}
}
\newcommand{\comment}[1]{}
\newcommand{\red}[1]{\ensuremath{\underset{#1}{\rightarrow}}}
\newcommand{\hred}{\red{h}}
\newcommand{\hRedP}[1]{\ensuremath{\underset{#1}{\rightsquigarrow}}}
\newcommand{\dev}{\ensuremath{\triangleright}}
\newcommand{\sdev}{\ensuremath{\blacktriangleright}}
\newcommand{\devn}{\ensuremath{\dev^*}}
\newcommand{\idev}{\ensuremath{\stackrel{int}{\triangleright}}}
\newcommand{\idevn}{\ensuremath{\stackrel{int}{\triangleright^*}}}
\newcommand{\hredn}{\ensuremath{\hred^*}}
\newcommand{\hRedPn}[1]{\ensuremath{\underset{#1}{\overset{*}{\rightsquigarrow}}}}
\newcommand{\sepdev}{\ensuremath{\underset{h}{\triangleright}}}
\newcommand{\subsepdev}{\ensuremath{\underset{h}{\blacktriangleright}}}
\renewcommand{\l}{\lambda}
\newcommand{\fv}[1]{{\tt fv}(#1)}
\newcommand{\var}[1]{{\tt var}(#1)}
\newcommand{\appfrom}[1]{#1_1 #1_2}
\newcommand{\dom}{{\tt dom}}
\newcommand{\set}[1]{\{ #1 \}}
\newcommand{\match}[3]{\ensuremath{{#1}\ll^{#3}{#2}}}
\newcommand{\negMatch}[2]{\ensuremath{{#1}\not\ll{#2}}}
\newcommand{\metaexists}{\ensuremath{\pmb{\mathsf{\exists}}}}
\newcommand{\metaforall}{\ensuremath{\pmb{\forall}}}
\newcommand{\sthat}{\ensuremath{\textnormal{ s.t. }}}
\newcommand{\setsthat}{\ensuremath{\ : \ }}
\newcommand{\metadot}{\ensuremath{\ . \ }}
\newcommand{\carlos}[1]{{\color{red} Carlos: #1}}
\newcommand{\sep}{\hspace{.5cm}}
\newcommand{\schemeName}[1]{\ensuremath{\mathsf{#1}}}
\newcommand{\setName}[1]{\ensuremath{\textbf{#1}}}
\newcommand{\regla}[3]{
    \prooftree
         #1
    \justifies  
         #2
    \thickness=0.05em
    \using
        #3
    \endprooftree}
\newcommand{\reglaangosta}[3]{
    \prooftree
         #1
    \justifies  
         #2
    \thickness=0.05em
    \using
        #3
    \endprooftree}
\newcommand{\reglaalta}[3]{
    \prooftree
         #1 \vspace{1mm}
    \justifies  
         #2
    \thickness=0.05em
    \using
        #3
    \endprooftree}
\newcommand{\textand}{\textnormal{ and }}
\newcommand{\parteuno}{\textit{(i)}}
\newcommand{\partedos}{\textit{(ii)}}
\newcommand{\partetres}{\textit{(iii)}}
\newcommand{\abajito}{\hspace{1mm} \\ \vspace{-4mm}}
\newcommand{\abajo}{\hspace{1mm} \\}
\newtheorem{theorem}{Theorem}[section]
\newtheorem{lemma}[theorem]{Lemma}
\newtheorem{corollary}[theorem]{Corollary}
\newenvironment{proof}[1][Proof]{\begin{trivlist}
\item[\hskip \labelsep {\bfseries #1}]}{\hspace{\stretch{1}} $\Box$ \end{trivlist}}
\newtheorem{definition}[theorem]{Definition}
\newcommand{\reflem}[1]{L.~\ref{lem:#1}}
\newcommand{\refeq}[1]{(\ref{eq:#1})}
\newcommand{\qed}{\nobreak \ifvmode \relax \else
      \ifdim\lastskip<1.5em \hskip-\lastskip
      \hskip1.5em plus0em minus0.5em \fi \nobreak
      \vrule height0.75em width0.5em depth0.25em\fi}
\newcommand{\DRefl}{\schemeName{DRefl}}
\newcommand{\DAbs}{\schemeName{DAbs}}
\newcommand{\DApp}{\schemeName{DApp}} 
\newcommand{\DBeta}{\schemeName{DBeta}}
\newcommand{\HAppL}{\schemeName{HApp1}}
\newcommand{\HBeta}{\schemeName{HBeta}}
\newcommand{\HPat}{\schemeName{HPat}}
\newcommand{\PatHead}{\schemeName{PatHead}}
\newcommand{\PatL}{\schemeName{Pat1}}
\newcommand{\PatR}{\schemeName{Pat2}}
\newcommand{\IRefl}{\schemeName{IRefl}}
\newcommand{\IAbs}{\schemeName{IAbs}}
\newcommand{\IAppL}{\schemeName{IApp1}} 
\newcommand{\IAppR}{\schemeName{IApp2}}
\newcommand{\PMatch}{\schemeName{PMatch}}
\newcommand{\PConst}{\schemeName{PConst}}
\newcommand{\PNoCData}{\schemeName{PNoCData}}
\newcommand{\PCDataNoL}{\schemeName{PCDataNo1}}
\newcommand{\PCDataNoR}{\schemeName{PCDataNo2}}
\newcommand{\PCDataNoT}{\schemeName{PCDataNo3}}
\newcommand{\onlyEa}[1]{}
\newcommand{\onlyFull}[1]{#1}
\newcommand{\diffEaFull}[2]{#2}
\title{A standardisation proof for algebraic pattern calculi}
\author{Delia Kesner
\institute{PPS, CNRS and Universit\'e Paris Diderot \\
France}
\email{Delia.Kesner@pps.jussieu.fr}
\and
Carlos Lombardi
\institute{Depto. de Ciencia y Tecnolog\'ia \\ 
Univ. Nacional de Quilmes \\ Argentina}
\email{clombardi@unq.edu.ar}
\and
Alejandro R\'ios
\institute{Depto. de Computaci\'on \\ 
Facultad de Cs. Exactas y Naturales \\
Univ. de Buenos Aires -- Argentina}
\email{rios@dc.uba.ar}
}
\begin{document}
\maketitle

\vspace{-4mm}
\begin{abstract}

This work gives some insights and results on 
standardisation for call-by-name pattern calculi.
More precisely, we  define  standard reductions for a pattern calculus with
constructor-based data terms and patterns. This notion is based on
reduction steps that are needed to match an argument with respect to a
given pattern.
We  prove  the  Standardisation Theorem  by  using the  technique
developed by  Takahashi~\cite{takahashi-std} and Crary~\cite{crary-std}  for $\l$-calculus.  
The  proof is
based on the fact that any  development can be specified as a sequence
of  head steps  followed by  internal  reductions, i.e.
reductions  in which  no head  steps are
involved. 

\comment{\carlos{Analyse the differences between the
  proofs by Takahashi and by Barendregt, as Barendregt seems to
  define an internal development concept -- result of a 5-minute
  reading}}

\end{abstract}

\vspace{-2mm}

\newsection{Introduction}

{\bf  Pattern   Calculi}:  Several  calculi,   called  \textit{pattern
calculi}, have been proposed in  order to give a formal description of
pattern matching;  i.e. the ability to analyse  the different possible
forms of the argument of a function in order to decide among different
alternative definition clauses.

The  \textit{pattern   matching}  operation  is  the   kernel  of  the
evaluation  mechanism  of   all  these  formalisms,  basically  because
reduction  can only  be  fired when  the  argument passed  to a  given
function  matches its  pattern specification.  An analysis  of various
pattern  calculi  based  on  different  notions  of  pattern  matching
operations  and  different  sets  of  allowed patterns  can  be  found
in~\cite{jk-fcp}.

\onlyFull{\medskip}
\noindent {\bf Standardisation}: A fundamental result in the $\l$-calculus is the
\textit{Standardisation Theorem}, which states that if a term $M$
$\beta$-reduces  to  a  term  $N$,  then there  is  a  {\it  standard}
$\beta$-reduction  sequence from $M$  to $N$  which can  be seen  as a
canonical way to reduce  terms.  This result has several applications,
e.g. it is used to  prove the non-existence of reduction between given
terms.  One  of its  main corollaries is  the quasi-leftmost-reduction
theorem, which  in turn is used  to prove the  non-existence of a normal
form for a given term.

A first study on
standardisation for call-by-name $\l$-calculus
appears in~\cite{curry-feys}. Subsequently, several standardisation methods 
have been devised, for
example~\cite{barendregt} Section 11.4,~\cite{takahashi-std},
~\cite{kashima-std} and~\cite{plotkin-std}. 

While  leftmost-outermost  reduction  gives  a standard  strategy  for
call-by-name $\l$-calculus, more  refined notions of reductions  are necessary to
define        standard       strategies        for       call-by-value
$\l$-calculus \cite{plotkin-std},     first-order     term    rewriting
systems~\cite{HL91,Terese03},    Proof-Nets~\cite{girard},   etc.   

All standard reduction strategies  require the definition of some {\it
selected} redex by means of  a partial function from terms to redexes;
they  all give  priority  to  the selected  step,  if possible.   This
selected  redex is  sometimes called  {\it external}~\cite{MelliesTh},
but we will refer here to it as the \textit{head redex} of a term.

It is also worth mentioning a generic standardisation proof~\cite{PDR04}
that can uniformly treat cal-by-name and call-by-value
$\l$-calculus. It is parameterized over the set of values that allow
to fire the beta-reduction rule. However, the set of values are defined
there in a global sense, while in pattern calculi 
being a value strongly depends on the form of the given pattern.

\onlyFull{\medskip}
 
\noindent {\bf Standardisation in Pattern  Calculi}: For call-by-name $\l$-calculus, any term of the form
$(\l x.M)N$  is a redex,  and the  head redex for  such a term  is the
whole term.  In pattern calculi any  term of the form $(\l p.M)N$ is a
redex candidate,  but not  necessarily a redex.  The parameter  $p$ in
such terms can  be more complex than a single  variable, and the whole
term is not a redex
if the argument $N$ does not match $p$, i.e., if $N$ does not verify 
the structural conditions imposed by $p$.  
In this case
we will choose
as head a reduction step lying inside $N$ (or even inside
$p$) which makes $p$ and $N$ be closer to a possible match.
While this situation bears some resemblance with 
\textit{call-by-value} $\l$-calculus~\cite{plotkin-std}, there is an
important difference: both the fact of $(\l p.M) N$ being a redex, and
whether a redex inside $N$ could be useful to get $p$ and $N$
closer to a possible match, depend on \textit{both} $N$ \textit{and} $p$.

The aim of this contribution
is  to analyse  the  existence of  a standardisation procedure for
pattern calculi  in a direct way, i.e. without using any complicated encoding of such
calculi into  some general computational  framework~\cite{kvodv}. This
direct  approach aims to put to  evidence  the  fine  interaction  between
reduction      and     pattern      matching,     and      gives     a
standardisation \textit{algorithm}  which is  specified in terms  of the
combination  of   computations  of  independent   terms  with  partial
computations of terms  depending on some pattern.  We  hope to be able
to  extend this  algorithmic  approach to  more sophisticated  pattern
calculi handling open  and dynamic patterns~\cite{JK05}.

\onlyEa{
The paper is organized as follows. Section \ref{s:calculus} introduces
the calculus, Sections~\ref{s:concepts} and ~\ref{s:lemmas} give, respectively, the
main concepts and ideas needed for the standardisation proof and the main results, and Section~\ref{s:conclusion}
concludes and gives future research directions.
}

\onlyFull{
\medskip
The paper is organized as follows. Section \ref{s:calculus} introduces
the calculus, Section~\ref{s:concepts} gives the main concepts needed for the standardisation proof and the main results, Section~\ref{s:prelimResults} presents some lemmas used in the main proofs, Sections~\ref{s:separability} and~\ref{s:standardisation} show the main results used in the Standardisation Theorem proof and then the theorem itself; finally, Section~\ref{s:conclusion} concludes and gives future research directions.
}

\newsection{The calculus}
\label{s:calculus}

We will study a very simple form of pattern calculus, consisting of
the extension of standard $\l$-calculus with a set of constructors
and allowing constructed patterns. This calculus appears for
example in Section 4.1 in~\cite{jk-fcp}.

\begin{definition}[Syntax]
The calculus is built upon two different enumerable sets of symbols, the \textit{variables} $x,y,z,w$ and the \textit{constants} $c,a,b$; its syntactical categories are:
$$
\begin{array}{lrcllrcl}
{\bf Terms}        & M,N,Q,R & :: = & x \mid c \mid \l p. M \mid MM  &
{\bf DataTerms}    & D &       ::  = & c \mid DM \\
{\bf Patterns}     & p,q &     :: = & x \mid d &
{\bf DataPatterns} & d &       :: = & c \mid d p \\
\end{array}
$$
\end{definition}

Free and bound variables of terms are defined as expected as well as $\alpha$-conversion.

\onlyFull{
\begin{definition}[Substitution]
A susbsitution $\theta$ is a function from variables to terms with finite
domain, where $\dom(\theta)= \{ x \setsthat \theta(x) \neq x \}$. The extension of $\theta$ to terms is defined as expected. We denote $\theta \ ::= \ \set{x_1/M_1, \ldots, x_n/M_n}$ wherever $\dom(\theta) \subseteq \{ x_1, \ldots, x_n\}$.
Moreover, for $\theta, \nu$ substitutions, $X$ a set of variables, we define
$$
\begin{array}{rcl}
\var{\theta}  & :: = & \dom(\theta) \,\bigcup\; 
	                     \big(\cup_{x \in \dom(\theta)} \fv{\theta x}\big)  \\
\nu \theta    & :: = & \big(\cup_{x \in \dom(\theta)} \set{x / \nu(\theta x) }\big) 
	                     \,\bigcup\; 
	                     \big(\cup_{x \in (\dom(\nu) -\dom(\theta))} \set{x / \nu x }\big)  \\
\theta \mid_X & :: = & \cup_{x \in X \cap \dom(\theta)} \set{x / \theta x }    
\end{array}
$$
\end{definition}
}

\begin{definition}[Matching]
Let $p$ be a pattern and $M$ a term which do not share common variables.
Matching on $p$ and $M$ is a partial function yielding a substitution and defined by the following rules ($\uplus$ on substitutions denotes disjoint union with respect to their domains, being undefined if the domains have a non-empty intersection):

\begin{center}
$
\begin{array}{c}
\regla
    {}
    {\match{x}{N}{\set{x/N}}}
    {}
\hspace{1cm}
\regla
    {}
    {\match{c}{c}{\emptyset}}
    {}
\hspace{1cm}
\regla
    {\match{d}{D}{\theta_1} 
        \sep \match{p}{N}{\theta_2} 
        \sep {\theta_1 \uplus \theta_2} \textnormal{ defined}}
    {\match{dp}{DN}{\theta_1 \uplus \theta_2}}
    {}
\end{array}$
\end{center}

We write $\match{p}{M}{}$ iff $\exists \theta\ \match{p}{M}{\theta}$. 
Remark that $\match{p}{M}{}$ implies that $p$ is linear.
\end{definition}

\begin{definition}[Reduction step]
We consider the following reduction steps modulo $\alpha$-conversion:
\begin{center}
$
\begin{array}{cccc}
\reglaangosta
    {M\red{} M'}
    {M\,N\red{} M'\,N}
    {\schemeName{SAppL}} & 
\reglaangosta
    {N\red{} N'}
    {M\,N\red{} M\,N'}
    {\schemeName{SAppR}} & 
\reglaangosta
    {\match{p}{N}{\theta}}
    {(\l p.M)\,N\red{} \theta M}
    {\schemeName{SBeta}} & 
\reglaangosta
    {M\red{} M'}
    {\l p.M\red{}\l p.M'}
    {\schemeName{SAbs}}
\end{array}
$
\end{center}
\onlyEa{\vspace{-2mm}}
\end{definition}

By working modulo
$\alpha$-conversion  we can always assume
in rule (\schemeName{SBeta}) that
$p$ and $N$ do not share
common variables in order to compute matching.

\onlyFull{
\begin{lemma}[Basic facts about the calculus] \label{lem:calculus-basics} \abajito
\begin{enumerate}[a.]
	\item \label{bfc:b+c}  
             (data pattern/term structure)  Let $d \in \setName{DataPatterns}$ (resp. $D \in \setName{DataTerms}$), 
             then $d = c p_1 \ldots p_n$ (resp. $D = c M_1 \ldots M_n$) for some $n \geq0$. 
	\item \label{bfc:d}  
              (data patterns only match data terms) Let $d \in \setName{DataPatterns}$, $M$ a term, such that $\match{d}{M}{}$. Then $M \in \setName{DataTerms}$.
	\item \label{bfc:e}  
              (minimal matches) If $\match{p}{M}{\theta}$ then $\dom(\theta) = \fv{p}$.
	\item \label{bfc:f} 
              (uniqueness of match) If $\match{p}{M}{\theta_1}$ and $\match{p}{M}{\theta_2}$, then $\theta_1 = \theta_2$.
\end{enumerate}
\end{lemma}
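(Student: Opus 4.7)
The plan is to prove each of the four items by a short structural or rule induction, dispatching them in the stated order so that each part can rely on the earlier ones if needed. No item poses a genuine obstacle; the subtlest points are just the rule-out arguments that limit the number of inductive cases in parts (b) and (d).

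For part (a), I would induct on the grammar of data patterns (resp.\ data terms). The base case $d=c$ gives $n=0$. In the inductive case $d=d'p$ the grammar forces $d'$ to be a data pattern, so the inductive hypothesis yields $d' = c\, p_1 \ldots p_{n-1}$ and hence $d = c\, p_1 \ldots p_{n-1}\, p$, which has the required shape. The data-term case is identical word for word, replacing patterns by terms.

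For part (b), I would induct on the derivation of $\match{d}{M}{\theta}$. The variable rule cannot close the derivation, because a data pattern is never a variable. The constant rule forces $d = c$ and $M = c \in \setName{DataTerms}$. In the compound case $d = d_1\, p$ and $M = M_1\, M_2$ with $\match{d_1}{M_1}{}$ derivable; the grammar forces $d_1$ to be a data pattern, so the inductive hypothesis yields $M_1 \in \setName{DataTerms}$ and hence $M_1\, M_2 \in \setName{DataTerms}$. Part (c) is proved by the same induction: the variable and constant cases give $\dom(\theta) = \{x\} = \fv{x}$ and $\dom(\theta) = \emptyset = \fv{c}$ respectively, while the compound case combines the two inductive hypotheses using the behaviour of $\uplus$ on disjoint domains to obtain $\dom(\theta_1 \uplus \theta_2) = \dom(\theta_1) \cup \dom(\theta_2) = \fv{d} \cup \fv{p} = \fv{dp}$.

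For part (d), I would induct on $p$. If $p$ is a variable or a constant, then both matching derivations must end with the variable or constant rule respectively, and the substitutions they produce are literally identical. If $p$ is a compound data pattern $dp'$, then both derivations must end with the compound rule, since neither of the other two rules applies to a pattern of shape $dp'$. Hence $M$ decomposes uniquely as $M = M_1\, M_2$ by injectivity of application in the term grammar, and the substitutions take the form $\theta_i = \theta_i^1 \uplus \theta_i^2$ with $\match{d}{M_1}{\theta_i^1}$ and $\match{p'}{M_2}{\theta_i^2}$. Applying the inductive hypothesis to $d$ and to $p'$ gives $\theta_1^1 = \theta_2^1$ and $\theta_1^2 = \theta_2^2$, whence $\theta_1 = \theta_2$. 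The mild obstacle anywhere in the lemma is thus just carefully ruling out the matching rules that cannot legitimately close a derivation whose conclusion has constrained shape.
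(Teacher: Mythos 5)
Your proof is correct. The paper states this lemma without giving any proof at all (it is treated as a collection of routine facts), and your structural inductions on the grammar of data patterns/terms and on the matching derivations are exactly the intended routine arguments, including the correct observation that the variable rule can never close a derivation whose pattern is a data pattern.
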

}

Crucial to the standardisation proof is the concept of development, we formalize it through the relation $\dev$ 
, meaning
$M \dev N$ iff there is a development (not necessarily complete) with source $M$ and target $N$.
\vspace{-5mm}

\begin{definition}[Term and substitution development]
We define the relation $\dev$ on terms and a corresponding relation $\sdev$ on substitutions.
The relation $\dev$ is defined by the following rules:
\begin{center}
$
\begin{array}{c}
\reglaangosta
    {}
    {M\dev M}
    {\DRefl}
\hspace{1cm}
\reglaangosta
    {M\dev M'}
    {\l p.M\dev \l p.M'}
    {\DAbs}
\diffEaFull{\hspace{1cm}}{\\ \\}
\reglaangosta
    {M\dev M'
    \quad
     N\dev N'}
    {M\,N\dev M'\,N'}
    {\DApp} 
\hspace{1cm}
\reglaangosta
    {M\dev M'
    \quad
    \theta \sdev \theta'
    \quad 
     \match{p}{N}{\theta}}
    {(\l p.M)\,N\dev \theta' M'}
    {\DBeta}
\\
\end{array}
$ \\
\end{center}

\noindent and $\sdev$ is defined as follows: $\theta \sdev \theta'$ iff
$\dom(\theta) = \dom(\theta')$ and $\forall x\in \dom(\theta) \metadot \theta x \dev \theta' x$
\end{definition}

\onlyFull{\newsubsection{Head step}}
The definition of head step will take into account the terms $(\l p.M) N$ 
even if $\negMatch{p}{N}$. In such cases, the head redex will be inside $N$ as the patterns in this calculus are always normal forms (this will not be the case for more complex pattern calculi). 

The selection of the head redex inside $N$ depends on both $N$ and
$p$.  This differs from standard call-by-value $\l$-calculus, where
the selection depends only on $N$.

We show this phenomenon with a simple example. Let $a,b,c$ be
constants and  $N = (a R_1) R_2$, where $R_1$ and $R_2$ are
redexes. The redexes in $N$ needed to achieve a match with a
certain pattern $p$, and thus the selection of the head redex, depend on
the pattern $p$.

Take for example different patterns $p_1 = (a x) (b y), p_2 = (a b x)
y, p_3 = (a b x) (c y), p_4 = (a x) y$, and consider the term 
$Q = (\l p.M) N$. 
If $p = p_1$, then it is not necessary to reduce $R_1$
(because it already matches $x$) but it is necessary to reduce
$R_2$, because no redex can match the pattern $b y$; hence $R_2$ will
be the head redex in this case. Analogously, for $p_2$ it is necessary
to reduce $R_1$ but not $R_2$, for $p_3$ both are needed (in this case
we will choose the leftmost one) and $p_4$ does match $N$, hence the
whole $Q$ is the head redex.
This observation motivates the following definition.

\begin{definition}[Head step] \label{dfn:hred}
The relations $\hred$ (head step) and $\hRedP{p}$ (preferred needed step to match pattern $p$) are defined as follows:
\begin{center}
$
\begin{array}{c}
\reglaalta
    {M\hred M'}
    {M\,N\hred M'\,N}
    {\HAppL}
\hspace{1cm}
\regla
    {\match{p}{N}{\theta}}
    {(\l p.M)\,N\hred \theta M}
    {\HBeta}
\hspace{1cm}
\reglaalta
    {N\hRedP{p} N'}
    {(\l p.M)\,N\hred (\l p.M)\,N'}
    {\HPat}
\\
\\
\reglaalta
    {M\hred M'}
    {M\hRedP{d}M'}
    {\PatHead}
\hspace{1cm}
\reglaalta
    {D \hRedP{d} D'}
    {D M \hRedP{dp} D' M}
    {\PatL}
\hspace{1cm}
\reglaalta
    {M \hRedP{p} M'
     \quad
     \match{d}{D}{}}
    {D M \hRedP{dp} D M'}
    {\PatR}
\end{array}
$
\end{center}
\end{definition}
The rule \PatHead\ is intended for data patterns only, not being valid for variable patterns; we point this by writing a $d$ (data pattern) instead of a $p$ (any pattern) in the arrow subscript inside the conclusion.

We observe that the rule analogous to \HPat\ in the presentation of
standard reduction sequences for call-by-value
$\l$-calculus in both \cite{plotkin-std} and \cite{crary-std} reads
\vspace{-1mm}
$$ \frac{N \hred N'}{(\l p.M)N \hred (\l p.M)N'} $$ 
\vspace{-3mm} \\
reflecting the $N$-only-dependency feature aforementioned.

We see also that a head step in a term like $(\l p.M) N$ determined by rule \HPat\ will lie inside $N$, 
but the same step will not necessarily be considered head if we analyse $N$ alone. 

It is easy to check that if $M \hRedP{p} M'$ then $\negMatch{p}{M}$, avoiding any overlap between \HBeta\ and \HPat\, and also between \PatL\ and \PatR. 
This in turn implies that all terms have at most one head redex.
We remark also that the head step depends not only on the pattern structure but also on the match or lack of match between pattern and argument.

\onlyFull{
\begin{lemma}[Basic facts about head steps] \label{lem:head-step-basics} \abajito
\begin{enumerate}[a.]
	\item \label{bf:a} (head reduction only if abstraction in head) 
			Let $M$ be a term such that $M \hred M'$ for some $M'$. 
			Then $M = (\l p.M_{01}) M_1 \ldots M_n$ with $n \geq 1$.
	\item \label{bf:b} (head reduction only if no match) 
			Let $M$ be a term such that $M \hred M'$ for some $M'$, $d \in \setName{DataPatterns}$. 
			Then $\negMatch{d}{M}$.
	\item \label{bf:c} ($\hRedP{p}$ only if $\hred$ or data term) 
			Let $p$ be a pattern and let $M$ be a term such that $M \hRedP{p} M'$ for some $M'$. 
			Then either $M \in \setName{DataTerms}$ or $M \hred M'$.
\end{enumerate}
\end{lemma}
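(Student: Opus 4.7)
I would prove the three items by separate structural inductions on the derivations that introduce the head or preferred-needed step; each proof is short once the cases are laid out.

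\textbf{Part a.} Induction on the derivation of $M \hred M'$. The two cases \HBeta\ and \HPat\ both have a conclusion of the form $(\l p.M_0)\,N \hred \ldots$, matching the required shape with $n = 1$. In the case \HAppL, we have $M = M_1\,N$ with $M_1 \hred M_1'$; the inductive hypothesis gives $M_1 = (\l p.M_{01})\,N_1 \ldots N_n$ with $n \geq 1$, so $M = (\l p.M_{01})\,N_1 \ldots N_n\,N$ has the required form with $n+1 \geq 1$ arguments.

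\textbf{Part b.} This is an immediate consequence of part~a together with the Basic Facts of the calculus (Lemma~\ref{lem:calculus-basics}). By part~a, $M = (\l p.M_{01})\,M_1 \ldots M_n$ with $n \geq 1$; in particular, when written as left-associated applications, the leftmost atom of $M$ is an abstraction. On the other hand, \ref{lem:calculus-basics}.\ref{bfc:b+c} says that every data term is of the form $c\,N_1 \ldots N_k$, so its leftmost atom is a constant. Hence $M \notin \setName{DataTerms}$, and \ref{lem:calculus-basics}.\ref{bfc:d} (data patterns match only data terms) yields $\negMatch{d}{M}$.

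\textbf{Part c.} Induction on the derivation of $M \hRedP{p} M'$. The case \PatHead\ is immediate: the premise is $M \hred M'$. For \PatL\ we have $M = D\,M_0$ with $D$ a data term (by the metavariable convention and \ref{lem:calculus-basics}.\ref{bfc:b+c}), so $D\,M_0 \in \setName{DataTerms}$ by the grammar of data terms. For \PatR\ we have $M = D\,M_0$ with $\match{d}{D}{}$; by \ref{lem:calculus-basics}.\ref{bfc:d}, $D$ is a data term, and again $D\,M_0 \in \setName{DataTerms}$.

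The only subtlety I anticipate is purely notational: keeping straight that metavariables $D, d$ in the rules carry the implicit restriction to data terms and data patterns, and invoking the appropriate part of Lemma~\ref{lem:calculus-basics} explicitly at each appeal rather than treating the shape of data terms as obvious. No genuinely hard step is involved.
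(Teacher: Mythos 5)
Your proof is correct and follows the same route as the paper, which merely records that item~a is immediate by inspection of the rules, item~b follows from item~a together with Lemma~\ref{lem:calculus-basics}.\ref{bfc:d}, and item~c is immediate from the definition of $\hRedP{p}$. You have simply filled in the case analyses that the paper leaves implicit; no divergence or gap.
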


\begin{proof}
Item~(\ref{bf:a}) is trivial.
Item~(\ref{bf:b}) uses Item~(\ref{bf:a})
and \reflem{calculus-basics}:(\ref{bfc:d}). 
Item~(\ref{bf:c}) is trival by definition of $\hRedP{p}$.
\end{proof}
}

\newsection{Main concepts and ideas needed for the standardisation proof}
\label{s:concepts}
In order to build a standardisation proof for constructor based
pattern calculi we chose to adapt the one in~\cite{takahashi-std} for
the call-by-name $\l$-calculus, later adapted to call-by-value
$\l$-calculus in~\cite{crary-std}, over the classical presentation
of \cite{plotkin-std}.

The proof method relies on a \textbf{h-development} property
stating that any
development can be split into a leading sequence of head steps followed by a
development in which no head steps are performed; this is our
Corollary~\ref{lem:crary-6} which corresponds to the so-called ``main lemma''
in the presentations by Takahashi and Crary.

Even for a simple form of pattern calculus such as the one presented
in this contribution, both the definitions (as we already mentioned
when defining head steps) and the proofs are non-trivial 
extensions of the 
corresponding ones for standard $\l$-calculus, even in the
framework of call-by-value.
As mentioned before, the reason is the need to take into
account, for terms involving the application of a function to an
argument, the pattern of the function parameter when deciding
whether a redex inside the argument should be considered as a
head redex.

In order to formalize the notion of ``development without occurrences of head steps'', an \textit{internal development} relation will be defined.
The dependency on both $N$ and $p$ when analysing the reduction steps from a term like $(\l p.M)N$ is shown in the rule \IAppR.

\begin{definition}[Internal development] \label{dfn:idev}
The relations $\idev$ (internal development) and $\idev_p$ (internal development with respect to the pattern $p$) are defined as follows:
\begin{center}
$
\begin{array}{c}
\regla
    {}
    {M\idev M}
    {\IRefl}
\hspace{1cm}
\regla
    {M\dev M'}
    {\l p.M\idev \l p.M'}
    {\IAbs}
\hspace{1cm}
\regla
    {M\neq \l p.M_1
    \quad
     M\idev M'
    \quad
     N\dev N'}
    {M\,N\idev M'\,N'}
    {\IAppL} 
\diffEaFull{\vspace{3mm}}{\\}
\\
\regla
    {M \dev M'
    \quad
     N \idev_p N'}
    {(\l p.M)\,N \idev (\l p.M') N'}
    {\IAppR}
\hspace{1cm}
\regla
    {N \dev N' \quad \match{p}{N}{}}
    {N \idev_p N'}
    {\PMatch}
\hspace{1cm}
\regla
    {N \idev N'}
    {N \idev_{c} N'}
    {\PConst}
\diffEaFull{\vspace{3mm}}{\\}
\\
\regla
    {N \notin \setName{DataTerms}
    \quad
    N \idev N'}
    {N \idev_{dp} N'}
    {\PNoCData}
\hspace{1cm}
\regla
    {D \idev_{d} D'
    \quad
     M \dev M'
    \quad
    \negMatch{d}{D} }
    {D M \idev_{dp} D' M'}
    {\PCDataNoL}
\diffEaFull{\vspace{3mm}}{\\}
\\
\regla
    {D \dev D'
    \quad
     M \idev_p M'
    \quad
    \match{d}{D}{}
    \quad
    \negMatch{p}{M} }
    {D M \idev_{dp} D' M'}
    {\PCDataNoR} 
\diffEaFull{\vspace{3mm}}{\\}
\\
\regla
    {D \dev D'
    \quad
     M \dev M'
    \quad
    \match{d}{D}{}
    \quad
    \match{p}{M}{}
    \quad
    \negMatch{dp}{DM} }
    {D M \idev_{dp} D' M'}
    {\PCDataNoT} 
\onlyFull{\\}
\end{array}
$ \\

\end{center}
\end{definition}
\onlyEa{\vspace{-2mm}}
Remark  that rule \PCDataNoT\ is useful to deal with non-linear
patterns. \\
Thus for example, 
$ab ((\l y. y)c) \idev_{axx} abc$ since
$ab \dev ab$, $(\l y. y)c \dev c$, $\match{ax}{ab}{}$, 
$\match{x}{(\l y. y)c}{}$ but $\negMatch{axx}{ab((\l y. y)c)}$.

We observe also that if $N \idev N'$ or $N \idev_p N'$ then $N \dev N'$.

\onlyFull{
\bigskip
The following lemma analyses data / non-data preservation
\begin{lemma}[Development and data] \label{lem:dev-data} \abajito
\begin{enumerate}[a.]
	\item  \label{dev-date:uno} (internal development cannot create data terms) 
			Let $M \notin \setName{DataTerms}$, $N$ such that $M \idev N$. Then $N \notin \setName{DataTerms}$
	\item \label{dev-date:dos} (development from data produces always data) 
			Let $M \in \setName{DataTerms}$, $N$ such that $M \dev N$. Then $N \in \setName{DataTerms}$
\end{enumerate}
\end{lemma}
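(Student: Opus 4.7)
The plan is to prove both parts by straightforward structural induction on the derivation of the relation involved, exploiting the fact (see Lemma~\ref{lem:calculus-basics}(\ref{bfc:b+c})) that a data term is either a constant $c$ or an application $D M$ whose left subterm $D$ is itself a data term. In particular, no abstraction is a data term, and for an application $M_1 M_2$ we have $M_1 M_2 \in \setName{DataTerms}$ iff $M_1 \in \setName{DataTerms}$.

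For item \parteuno, I proceed by induction on the derivation of $M \idev N$, inspecting the last rule used. The cases \IRefl\ and \IAbs\ are immediate: in the first, $N = M$, and in the second $N = \l p.M'$ is an abstraction, hence not in $\setName{DataTerms}$. For \IAppL\, we have $M = M_1 M_2$ with $M_1 \neq \l p.M_{10}$, $M_1 \idev M_1'$, $M_2 \dev M_2'$ and $N = M_1' M_2'$. Since $M_1 M_2 \notin \setName{DataTerms}$, the structural fact above forces $M_1 \notin \setName{DataTerms}$; the induction hypothesis then yields $M_1' \notin \setName{DataTerms}$, whence $N = M_1' M_2' \notin \setName{DataTerms}$. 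Finally, in the \IAppR\ case $N = (\l p.M') N'$ is an application whose left subterm is an abstraction, so it is not a data term.

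For item \partedos, I proceed by induction on the derivation of $M \dev N$. The case \DRefl\ is trivial. The cases \DAbs\ and \DBeta\ cannot actually arise, because they require $M$ to be an abstraction or a $\beta$-redex $(\l p.M_1) M_2$, and in both cases $M$ would not be a data term, contradicting the assumption. For \DApp\, we have $M = M_1 M_2 \in \setName{DataTerms}$, so $M_1 \in \setName{DataTerms}$; by the induction hypothesis $M_1' \in \setName{DataTerms}$, hence $N = M_1' M_2' \in \setName{DataTerms}$ by the clause $D M \in \setName{DataTerms}$ of the grammar.

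No step is genuinely delicate; the only point that needs mild attention is recognising that the grammar of data terms gives an ``if and only if'' characterisation of when an application is a data term (depending only on its left subterm), which is what lets both inductions cleanly push through the application cases. The restriction $M_1 \neq \l p.M_{10}$ in \IAppL\ is not needed for part \parteuno, but it is what ensures the application cases of \idev\ and \dev\ partition exactly as expected.
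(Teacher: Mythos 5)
Your proof is correct: the paper states this lemma without giving a proof, and your argument --- induction on the derivation of $\idev$ (resp.\ $\dev$), using the fact that an application $M_1 M_2$ is in $\setName{DataTerms}$ iff $M_1$ is, so that \IAbs, \IAppR, \DAbs\ and \DBeta\ are immediate or vacuous and only \IAppL\ and \DApp\ need the induction hypothesis --- is exactly the routine argument the authors evidently intend.
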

}

The formal description of the h-development condition takes a form of an additional binary relation.
This relation corresponds to the one called \textit{strong parallel reduction} in \cite{crary-std}.
\vspace{-2mm}
\begin{definition}[H-development]
We define the relations $\sepdev$ and $\subsepdev$. Let $M,N$ be terms; $\nu, \theta$ substitutions.
\vspace{-2mm}
\begin{enumerate}[a.]
	\item 
	$M \sepdev N$   \hspace{3mm}
	iff  \hspace{3mm}
	\parteuno\ $M \dev N$,  \hspace{3mm}
	\partedos\ $\metaexists Q \sthat M \hredn Q \idev N$,   \hspace{3mm}
	\partetres\ $\metaforall p \metadot \metaexists Q_p \sthat M \hRedPn{p} Q_p \idev_p N$.
	
	\item
	$\nu \subsepdev \theta$ \hspace{3mm}
	iff \hspace{3mm}
	\parteuno\ $Dom(\nu) = Dom(\theta)$, \hspace{3mm}
	\partedos\ $\metaforall x \in Dom(\nu) \metadot \nu x \sepdev \theta x $.
\end{enumerate}
\end{definition}

The clause \partetres\ in the definition of $\sepdev$ shows the
dependency on the patterns that was already noted in the definitions 
of  head step and
internal development. 

This clause is needed when proving that all developments are
h-developments; let's grasp the reason through a brief argument.
Suppose we want to prove that a development inside $N$ in a term like
$(\l p.M)N$ is an h-development. The rules to be used in this case are \HPat\ (Def.~\ref{dfn:hred})
and \IAppR\ (Def.~\ref{dfn:idev}). Therefore we need to perform an
analysis \textit{relative to the pattern $p$}; and this is exactly expressed
by clause \partetres. 
Consequently the proof of clause \partedos\ for a term needs to
consider clause \partetres\ (instantiated to a certain pattern) 
for a subterm; this is achieved by including clause \partetres\ in the
definition and by performing an inductive reasoning  on terms.

\section{Auxiliary results}
\label{s:prelimResults}
We collect in this section some results needed to complete the main proofs in this article.

\begin{lemma}[pattern-head reduction only if there is no match] \label{lem:hredp-only-if-negmatch} \abajo
Let $M, N$ be terms, $p$ a pattern, such that $M \hRedP{p} N$. Then $\negMatch{p}{M}$.
\end{lemma}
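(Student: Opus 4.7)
The plan is to proceed by structural induction on the derivation of $M \hRedP{p} N$, analysing the last rule applied in Definition~\ref{dfn:hred}. There are three cases to consider, one for each of the rules \PatHead, \PatL, and \PatR.

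For the base case \PatHead, the premise is $M \hred N$ and the pattern subscript is forced to be a data pattern $d$ (as remarked just after the definition). I would then invoke \reflem{head-step-basics} item (b), which states exactly that a head-reducing term cannot match any data pattern, yielding $\negMatch{d}{M}$ immediately.

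The two inductive cases share a common shape: $M = D_0 M_0$ and $p = d_0 p_0$, with $d_0$ a data pattern. In case \PatL the subderivation is $D_0 \hRedP{d_0} D_0'$, so the IH gives $\negMatch{d_0}{D_0}$; in case \PatR the subderivation is $M_0 \hRedP{p_0} M_0'$, so the IH gives $\negMatch{p_0}{M_0}$. In either case I would argue by contradiction: suppose $\match{d_0 p_0}{D_0 M_0}{\theta}$ for some $\theta$. Since $d_0 p_0$ is neither a variable nor a constant, the only matching rule whose conclusion can have this shape is the third rule of the matching definition, forcing a decomposition $\match{d_0}{D_0}{\theta_1}$ together with $\match{p_0}{M_0}{\theta_2}$ for some $\theta_1, \theta_2$. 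Each of these submatches contradicts the appropriate IH (the first in case \PatL, the second in case \PatR), closing the case.

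I do not expect any serious obstacle. The only observation that must be made explicit is the \emph{forced decomposition} of a match of a compound pattern against an application, which is immediate from inspection of the three matching rules; together with the careful reading of the subscript constraint in \PatHead, this is all that is needed.
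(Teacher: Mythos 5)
Your proof is correct and follows essentially the same route as the paper, whose entire proof is the one-line citation of \reflem{head-step-basics}:(b); your induction on the derivation with that lemma handling the \PatHead\ base case and the forced decomposition of matching handling \PatL\ and \PatR\ is exactly the intended expansion of that one-liner.
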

\begin{proof}
Using \reflem{head-step-basics}:(b).
\end{proof}

\begin{lemma}[development cannot lose matches] \label{lem:dev-cannot-lose-match} \abajo
Let $M, N$ be terms, $p$ a pattern, such that $M \dev N$ and $\match{p}{M}{\nu}$. Then $\match{p}{N}{\theta}$ for some $\theta$ such that $\nu \sdev \theta$.
\end{lemma}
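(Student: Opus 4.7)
The plan is to proceed by induction on the derivation of $\match{p}{M}{\nu}$, equivalently on the structure of the pattern $p$, performing a case analysis on the last rule used to derive $M \dev N$ within each case.

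For $p = x$ the matching rule forces $\nu = \set{x/M}$, and we do not need any structural analysis of $N$: the substitution $\theta = \set{x/N}$ satisfies $\match{x}{N}{\theta}$, and $\nu \sdev \theta$ follows directly from the hypothesis $M \dev N$. For $p = c$ a constant the matching rule forces $M = c$; inspecting the four rules defining $\dev$, only \DRefl\ applies to a constant (since \DAbs, \DApp\ and \DBeta\ each require $M$ to have a specific compound shape), so $N = c$ and $\theta = \emptyset$ works.

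The substantive case is $p = d\,q$ with $d$ a data pattern. Here the matching rule decomposes $M = D\,M_1$ for some data term $D$, with $\match{d}{D}{\nu_1}$, $\match{q}{M_1}{\nu_2}$ and $\nu = \nu_1 \uplus \nu_2$. We inspect how $D\,M_1 \dev N$ can be derived: \DAbs\ is impossible since $M$ is an application; \DBeta\ would force $D = \l p'.M'$ and so is excluded because data terms are never abstractions; \DRefl\ gives $N = M$ and the original match transfers verbatim. The remaining possibility is \DApp, giving $N = D'\,N'$ with $D \dev D'$ and $M_1 \dev N'$. Applying the induction hypothesis to each premise yields substitutions $\theta_1, \theta_2$ with $\match{d}{D'}{\theta_1}$, $\match{q}{N'}{\theta_2}$, $\nu_1 \sdev \theta_1$ and $\nu_2 \sdev \theta_2$.

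The step requiring most care is assembling these into a single match $\match{d\,q}{D'\,N'}{\theta_1 \uplus \theta_2}$: one must verify that the disjoint union $\theta_1 \uplus \theta_2$ is actually defined. This uses the basic fact that matching produces a substitution whose domain equals the set of free variables of the pattern, so $\dom(\theta_1) = \fv{d} = \dom(\nu_1)$ and $\dom(\theta_2) = \fv{q} = \dom(\nu_2)$; since $\nu_1 \uplus \nu_2$ is defined by hypothesis, these two domains are disjoint, whence $\theta_1 \uplus \theta_2$ is defined and the data-pattern matching rule applies. The relation $\nu \sdev \theta_1 \uplus \theta_2$ finally follows pointwise from $\nu_i \sdev \theta_i$, using disjointness to locate each variable in exactly one summand.
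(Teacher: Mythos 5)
Your proof is correct and follows essentially the same route as the paper's: induction on the matching derivation, observing that \DAbs\ and \DBeta\ are ruled out in the compound case because the left component is a data term, and using the fact that a match's domain equals the free variables of the pattern to see that $\theta_1 \uplus \theta_2$ is defined (the paper phrases this via linearity of $p$ plus that same domain fact, which amounts to your disjointness argument). You merely spell out the axiom cases that the paper dismisses as trivial.
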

\begin{proof}
Induction on $\match{p}{M}{\nu}$. The axioms can be checked trivially. For the rule, let $M = M_1 M_2$, $N = N_1 N_2$, $p = p_1 p_2$ and $\nu = \nu_1 \uplus \nu_2$
; $p$ is linear since it matches a term
. 
The only rules applicable for $M \dev N$ are \DRefl\ or \DApp; \DBeta\ is not applicable because $M_1 \in \setName{DataTerms}$.
If \DRefl\ was used, the lemma holds trivially taking $\theta = \nu$.
If \DApp\ was used, we apply the IH on both hypotheses obtaining $\match{p_i}{N_i}{\theta_i}$ with $\nu_i \sdev \theta_i$
; by \reflem{calculus-basics}:(\ref{bfc:e}) and the linearity of $p$ we know 
$\theta = \theta_1 \uplus \theta_2$ is well-defined; it is easy to check that $\theta$ satisfies the lemma conditions.
\end{proof}

\begin{lemma}[$\idev_{p}$ cannot create match] \label{lem:idevp-cannot-obtain-match} \abajo
Let $M, N$ be terms, $p$ a pattern, such that $M \idev_p N$.
Then $\negMatch{p}{M}$ implies $\negMatch{p}{N}$.
\end{lemma}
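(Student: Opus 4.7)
The plan is to proceed by induction on the derivation of $M \idev_p N$, case-splitting on the last rule applied among the seven rules defining the relation. Most cases will be either vacuous or immediate from structural considerations; the real work will concentrate in the rule that handles non-linear patterns.

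I would first dispose of rule \PMatch, whose premise $\match{p}{M}{}$ directly contradicts the hypothesis $\negMatch{p}{M}$. For \PConst, where $p$ is a constant $c$, the assumption gives $M \neq c$; a brief inspection of the $\idev$ rules shows that each one preserves the outermost shape of the source (constant, variable, abstraction, or application), so $N \neq c$ and hence $\negMatch{c}{N}$. In case \PNoCData, with $p = dp_1$ and $M \notin \setName{DataTerms}$, \reflem{dev-data} guarantees $N \notin \setName{DataTerms}$, and since \reflem{calculus-basics} tells us that data patterns match only data terms, we get $\negMatch{dp_1}{N}$.

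The cases \PCDataNoL\ and \PCDataNoR\ are dispatched by the induction hypothesis applied to the smaller $\idev_d$ or $\idev_{p_1}$ premise. In both, $M = D M_2$ and $N = D' M_2'$, and the IH yields either $\negMatch{d}{D'}$ or $\negMatch{p_1}{M_2'}$. Either conclusion suffices to block any match of $dp_1$ against $D'M_2'$, since the only matching rule applicable to a compound data pattern against an application requires matching both components.

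The main obstacle, and the one place where non-linearity comes into play, is the case \PCDataNoT. Here we have $\match{d}{D}{\theta_1}$ and $\match{p_1}{M_2}{\theta_2}$ yet $\negMatch{dp_1}{DM_2}$. The only way the compound match can fail given that both component matches succeed is that $\theta_1 \uplus \theta_2$ is undefined, i.e.\ their domains overlap. By \reflem{calculus-basics} (minimal matches), $\dom(\theta_i)$ coincides with the free variables of the corresponding subpattern, so $\fv{d} \cap \fv{p_1} \neq \emptyset$ and $dp_1$ is non-linear. The remark following the definition of matching states that any successful match forces the pattern to be linear, so no term can match $dp_1$ at all; in particular $\negMatch{dp_1}{D'M_2'}$.
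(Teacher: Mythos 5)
Your proof is correct and follows essentially the same route as the paper's: induction on the derivation of $M \idev_p N$ with the same case analysis, dispatching \PMatch\ as vacuous, \PConst\ by shape preservation of $\idev$, \PNoCData\ via \reflem{dev-data} and \reflem{calculus-basics}, \PCDataNoL/\PCDataNoR\ by the induction hypothesis, and \PCDataNoT\ by observing that the failure of $\theta_1 \uplus \theta_2$ forces non-linearity of the pattern, which precludes any match. Your write-up is in fact somewhat more explicit than the paper's in the \PCDataNoT\ case, spelling out the appeal to minimal matches and to the linearity remark.
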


\begin{proof}
Induction on $M \idev_p N$ by rule analysis

\begin{description}
	\item[\PMatch] not applicable as $\negMatch{p}{M}$.
	\item[\PConst] in this case the condition $\negMatch{p}{M}$ implies $\negMatch{p}{N}$ equates to $M \neq p$ implies $N \neq p$, as $p$ is a constant. \\
	The rule premise reads $M \idev N$: if rule \IRefl\ was used then $N \neq p$ by hypothesis, else the \idev\ rule conclusions exclude the possibility of $N$ being a constant.
	\item[\PNoCData] $M \notin \setName{DataTerms}$ and $M \idev N$ by rule hyp., then $N \notin \setName{DataTerms}$ by L.~\ref{lem:dev-data}:(\ref{dev-date:uno}), finally $\negMatch{p}{N}$ 
        by L.~\ref{lem:calculus-basics}:(\ref{bfc:d}).
	\item[\PCDataNoL] By the IH, as rule hyp. includes both $D \idev_d D'$ and $\negMatch{d}{D}$ being $M = DT$ and $p = dp'$.
	\item[\PCDataNoR] Similar to the former considering $p = dp'$ and using $T \idev_{p'} T'$ and $\negMatch{p'}{T}$.	
        \item[\PCDataNoT] In this case $M = DM'$, $p = dp'$, $\match{d}{D}{\theta}$,
        $\match{p'}{M'}{\theta'}$ and $\negMatch{dp'}{DM'}$. We necessarily have 
        that $\theta  \uplus \theta'$ is not defined hence $p$ is not linear so that 
        $\negMatch{p}{N}$ also holds. 
\end{description} 
\end{proof}

\begin{lemma}[left-pattern-head implies whole-pattern-head] \label{lem:left-hredp-implies-hredp} \abajo
Let $p_1, p_2$ be patterns and $M_1, N_1, M_2$ be terms such that $M_1 \hRedP{p_1} N_1$.
Then $M_1 M_2 \hRedP{p_1 p_2} N_1 M_2$.
\end{lemma}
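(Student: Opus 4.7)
The plan is a short rule-based case split driven by \reflem{head-step-basics}(c). First I observe that since $p_1 p_2$ must belong to the syntactic category of patterns, and the grammar only allows an application shape inside \setName{DataPatterns}, we must have $p_1 = d$ for some data pattern $d$ and $p_2 = p$ for some pattern $p$. Consequently the hypothesis reads $M_1 \hRedP{d} N_1$ and the goal $M_1 M_2 \hRedP{dp} N_1 M_2$, so only the rules producing a conclusion of the form $\_ \hRedP{dp} \_$, namely \PatHead, \PatL{} and \PatR, are candidates for the last step of the derivation. Rule \PatR{} can be discarded immediately since it leaves the left subterm untouched whereas here the reduction happens on the left.

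Next I apply \reflem{head-step-basics}(c) to the hypothesis $M_1 \hRedP{d} N_1$, which yields the dichotomy $M_1 \in \setName{DataTerms}$, or $M_1 \hred N_1$. In the first case rule \PatL{} applies directly: its premise is exactly $M_1 \hRedP{d} N_1$, and its implicit requirement that the left subterm be a data term (reflected in the metavariable convention on the conclusion shape $DM \hRedP{dp} D'M$) is met, so one concludes $M_1 M_2 \hRedP{dp} N_1 M_2$. In the second case, I first lift $M_1 \hred N_1$ to $M_1 M_2 \hred N_1 M_2$ using \HAppL, and then apply \PatHead{} --- which is precisely the rule designed to turn a head step into a pattern-head step against a data pattern such as $dp$ --- to conclude.

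No substantive obstacle is expected. The only mildly delicate point is observing that the shape of $p_1 p_2$ forces $p_1$ to be a data pattern, which both justifies applicability of \PatHead{} (whose subscript must be a data pattern) and restricts the inference to the three $dp$-rules above. From there, \reflem{head-step-basics}(c) delivers exactly the dichotomy needed to choose between \PatL{} on the one hand and the pair \HAppL/\PatHead{} on the other.
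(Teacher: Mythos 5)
Your proof is correct and follows essentially the same route as the paper's: the paper also splits into the ``left subterm is a data term, use \PatL'' case versus the ``$M_1 \hred N_1$, use \HAppL{} then \PatHead'' case, differing only in that it infers $p_1 \notin Var$ from the absence of any rule deriving $\hRedP{x}$ and case-splits on the last rule of the derivation rather than invoking \reflem{head-step-basics}(c). These are trivially equivalent presentations of the same argument.
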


\begin{proof}
It is clear that $p_1 \notin Var$, because there is no $N_1$ such that $M_1 \hRedP{x} N_1$ if $x \in Var$.

If \PatHead\ applied in $M_1 \hRedP{p_1} N_1$, then $M_1 \hred N_1$, by \HAppL\ $M_1 M_2 \hred N_1 M_2$, and finally by \PatHead\ $M_1 M_2 \hRedP{p_1 p_2} N_1 M_2$.

If either \PatL\ or \PatR\ applied in $M_1 \hRedP{p_1} N_1$, then $M_1$ is clearly a data term, 
Then $M_1 M_2 \hRedP{p_1 p_2} N_1 M_2$  by \PatL.
\end{proof}

\begin{lemma}[matching is compatible with substitution] \abajo
\label{lem:match-compatible-subst}
Let $M$ be a term, $p$ a pattern and  $\theta$ a substitution such that $\match{p}{M}{\theta}$.
Then for any substitution $\nu$, the following holds: 
$\match{p}{\nu M}{\gamma}$ 
where $\gamma = \nu \theta \mid_{\fv{p}}$.
\end{lemma}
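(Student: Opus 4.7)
The plan is to proceed by structural induction on the derivation of $\match{p}{M}{\theta}$, following the three rules of the matching definition. Throughout, I will implicitly use $\alpha$-conversion to ensure that $p$ does not share variables with $\nu M$, so that the matching judgements we construct are well-formed; a standard renaming of the bound variables of the terms in the range of $\nu$ away from $\fv{p}$ suffices.

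For the axiom $\match{x}{N}{\set{x/N}}$, we have $\nu M = \nu N$, and the same axiom yields $\match{x}{\nu N}{\set{x/\nu N}}$. A direct unfolding shows $\gamma = \nu\set{x/N}\mid_{\set{x}} = \set{x/\nu N}$, as required. For the axiom $\match{c}{c}{\emptyset}$, the result is immediate since $\nu c = c$ and $\gamma = \emptyset$.

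The interesting case is the compound rule with $p = d\,p'$, $M = D\,N$, $\match{d}{D}{\theta_1}$, $\match{p'}{N}{\theta_2}$, and $\theta = \theta_1 \uplus \theta_2$ defined. Applying the inductive hypothesis to both premises gives $\match{d}{\nu D}{\gamma_1}$ and $\match{p'}{\nu N}{\gamma_2}$, where $\gamma_1 = \nu\theta_1\mid_{\fv{d}}$ and $\gamma_2 = \nu\theta_2\mid_{\fv{p'}}$. Since $p$ is linear (it matches, so \reflem{calculus-basics}.(\ref{bfc:f}) or the remark after the definition of matching applies), $\fv{d}$ and $\fv{p'}$ are disjoint, and hence by \reflem{calculus-basics}.(\ref{bfc:e}) the domains of $\gamma_1$ and $\gamma_2$ are disjoint, so $\gamma_1 \uplus \gamma_2$ is defined. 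The matching rule then yields $\match{d\,p'}{\nu(D\,N)}{\gamma_1 \uplus \gamma_2}$.

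The remaining task is to identify $\gamma_1 \uplus \gamma_2$ with $\gamma = \nu(\theta_1 \uplus \theta_2)\mid_{\fv{d\,p'}}$. This is a short computation: restriction distributes over disjoint union, and the domain of $\theta_i$ coincides with $\fv{\cdot}$ of the corresponding subpattern (again by \reflem{calculus-basics}.(\ref{bfc:e})), so $\nu(\theta_1 \uplus \theta_2)\mid_{\fv{d\,p'}} = (\nu\theta_1\mid_{\fv{d}}) \uplus (\nu\theta_2\mid_{\fv{p'}})$. The only subtle point — which I expect to be the main (mild) obstacle — is ensuring the disjoint-union bookkeeping is justified by the linearity of $p$; everything else is a routine calculation with substitutions.
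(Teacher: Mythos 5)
Your proposal is correct and follows essentially the same route as the paper's own proof: induction on the matching derivation, trivial axiom cases, IH on both premises of the compound rule, and then the identification $(\nu\theta_1)\mid_{\fv{d}} \uplus (\nu\theta_2)\mid_{\fv{p'}} = (\nu(\theta_1\uplus\theta_2))\mid_{\fv{dp'}}$, which the paper likewise dispatches as an easy check resting on \reflem{calculus-basics}:(\ref{bfc:e}). Your additional care about $\alpha$-renaming and the linearity bookkeeping is a harmless elaboration of what the paper leaves implicit.
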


\begin{proof}
By induction on the match. The axioms can be checked trivially given L.~\ref{lem:calculus-basics}:(\ref{bfc:e}). 

We analyze the rule applied in this context
$$
\frac
{\match{d}{M_1}{\theta_1} \qquad \match{p'}{M_2}{\theta_2}}
{\match{dp' = p}{M = M_1 M_2}{\theta = \theta_1 \uplus \theta_2}}
$$
Applying the IH on both hypotheses and then using the rule gives 
$\match{dp'}{M_1 M_2}{(\nu \theta_1) \mid_{\fv{d}} \uplus (\nu \theta_2) \mid_{\fv{p}}}$; an easy check of 
$(\nu \theta_1) \mid_{\fv{d}} \uplus (\nu \theta_2) \mid_{\fv{p'}} 
  = 
 (\nu (\theta_1 \uplus \theta_2)) \mid_{\fv{dp'}}$ 
concludes the proof.
\end{proof}

\begin{lemma}[development is compatible with substitution] \label{lem:dev-compatible-subst} \abajo
Let $M, N$ be terms and  $\nu, \theta$ substitutions, such that $M \dev N$ and $\nu \sdev \theta$. 
Then $\nu M \dev \theta N$ 
\end{lemma}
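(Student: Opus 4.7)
The plan is to proceed by structural induction on the derivation of $M \dev N$, relying on a subsidiary induction on $M$ to handle reflexivity.

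\textbf{Base and structural cases.} In the \DRefl\ case, $N = M$ and we must show $\nu M \dev \theta M$ by induction on $M$: a variable $x \in \dom(\nu)$ is handled by the definition of $\nu \sdev \theta$, a variable $x \notin \dom(\nu)$ or a constant is handled by \DRefl, and abstractions and applications are handled by the subsidiary IH together with \DAbs\ or \DApp\ (renaming pattern variables via $\alpha$-conversion to prevent capture). The cases \DAbs\ and \DApp\ follow directly by applying the outer IH to the premises and invoking the corresponding constructor, with $\alpha$-renaming in the abstraction case.

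\textbf{The \DBeta\ case.} Here $M = (\l p.M_0) M_1$ and $N = \rho' M_0'$, with premises $M_0 \dev M_0'$, $\rho \sdev \rho'$, and $\match{p}{M_1}{\rho}$. By $\alpha$-conversion we assume $\fv{p}$ is disjoint from $\var{\nu} \cup \var{\theta}$. The outer IH on $M_0 \dev M_0'$ yields $\nu M_0 \dev \theta M_0'$; applied to each subderivation $\rho x \dev \rho' x$ (for $x \in \fv{p}$) within $\rho \sdev \rho'$, it gives $\nu(\rho x) \dev \theta(\rho' x)$. Setting $\gamma := \nu\rho \mid_{\fv{p}}$ and $\gamma' := \theta\rho' \mid_{\fv{p}}$, this yields $\gamma \sdev \gamma'$. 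Lemma~\ref{lem:match-compatible-subst} applied to $\match{p}{M_1}{\rho}$ gives $\match{p}{\nu M_1}{\gamma}$. An instance of \DBeta\ then produces $(\l p.\nu M_0)(\nu M_1) \dev \gamma'(\theta M_0')$.

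The main obstacle is purely bookkeeping: to read the conclusion above as $\nu M \dev \theta N$ we must verify the substitution-composition identities $\nu((\l p.M_0) M_1) = (\l p.\nu M_0)(\nu M_1)$ and $\theta(\rho' M_0') = \gamma'(\theta M_0')$. Both reduce to the definition of substitution composition in the paper together with the disjointness of $\fv{p}$ from $\var{\nu} \cup \var{\theta}$ secured by $\alpha$-conversion; the second identity requires in addition that $\gamma'$ behaves as $\theta \circ \rho'$ on $\fv{p}$, which holds by construction, while variables of $M_0'$ outside $\fv{p}$ are handled uniformly by $\theta$ on both sides.
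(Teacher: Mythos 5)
Your proposal is correct and follows essentially the same route as the paper's proof: induction on the derivation of $M \dev N$, a subsidiary induction on $M$ for \DRefl, direct use of the IH for \DAbs\ and \DApp, and for \DBeta\ the combination of $\alpha$-renaming to make $\fv{p}$ disjoint from $\var{\nu}\cup\var{\theta}$, the IH on the body and on the matching substitution, Lemma~\ref{lem:match-compatible-subst}, and the substitution-composition identity $\theta\rho' = (\theta\rho'\mid_{\fv{p}})\,\theta$, which is exactly the paper's equation~\refeq{technical-substitution-result}.
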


\begin{proof}
By induction on $M \dev N$ by rule analysis.

For \DRefl\ the thesis amounts to $\nu M \dev \theta M$, which can be checked by a simple induction on $M$.
\DAbs\ and \DApp\ can be simply verified by the IH.

For \DBeta\ first we mention a technical result which will be used.
Let $\theta$, $\tau$ be substitutions such that $\dom(\tau) \cap \var{\theta} = \emptyset$, then 
\begin{equation}
	\big( (\theta \tau) \mid_{\dom(\tau)} \big) \: \theta = \theta \tau
	\label{eq:technical-substitution-result}
\end{equation}
this can be easily checked comparing the effect of applying both substitutions to an arbitrary variable.

Let's analyze the rule premises and conclusion applied in this context
$$\frac
  {M_1 \dev M_1' \qquad \tau \sdev \tau' \qquad \match{p}{M_2}{\tau}}
  {M = (\l p.M_1) M_2 \dev \tau' M_1' = N} 
$$
As we can freely choose the variables appearing in $p$, we assume
$\fv{p} \cap (\var{\nu} \cup \var{\theta}) = \emptyset$. 
By L.~\ref{lem:calculus-basics}:(\ref{bfc:e}) we know $\dom(\tau) = \dom(\tau') = \fv{p}$.

We apply the IH on $M_1 \dev M_1'$ and also on $\tau x \dev \tau' x$ for each $x \in \dom(\tau)$ to conclude 
$\nu M_1 \dev \theta M_1'$ and 
$(\nu \tau) \mid_{\dom(\tau)} \sdev (\theta \tau') \mid_{\dom(\tau)}$
respectively.
Furthermore, from $\match{p}{M_2}{\tau}$ and L.~\ref{lem:match-compatible-subst} we conclude 
$\match{p}{\nu M_2}{(\nu \tau \mid_{\dom(\tau)})}$.

We use \DBeta\ from the three conclusions above to obtain
$$\nu M = (\l p.\nu M_1) (\nu M_2) 
  \dev 
  \big( (\theta \tau') \mid_{\dom(\tau)} \big) (\theta M_1')$$
To check
$\theta N = \theta (\tau' M_1') = 
 \big( (\theta \tau') \mid_{\dom(\tau)} \big) (\theta M_1')$ 
it is enough to verify \\
$\theta \tau' = \big( (\theta \tau') \mid_{\dom(\tau)} \big) \theta$,
the latter can be easily checked by (\ref{eq:technical-substitution-result}).

\end{proof}

\begin{lemma}[head reduction is compatible with substitution] \abajito
\label{lem:hred-compatible-subst}
\begin{itemize}
	\item[\parteuno] 
	Let $M, N$ be terms and  $\nu$ a substitution such that $M \hred N$. 
	Then $\nu M \hred \nu N$.
	
	\item[\partedos] 
	Let $M, N$ be terms, $p$ a pattern and  $\nu$ a substitution such that $M \hRedP{p} N$. 
	Then $\nu M \hRedP{p} \nu N$.
\end{itemize}
\end{lemma}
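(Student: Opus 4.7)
The plan is to prove both parts by simultaneous induction on the derivation of $M \hred N$ (for \parteuno) and $M \hRedP{p} N$ (for \partedos), following the rule structure of Definition~\ref{dfn:hred}. In each case the inductive hypothesis on a premise gives a head-reduction statement for the substituted subterm, and the same rule is applied to reassemble the conclusion.

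For part \parteuno, the cases \HAppL\ and \HPat\ are routine: apply the appropriate IH (part \parteuno\ for \HAppL, part \partedos\ for \HPat) and re-apply the rule, using the fact that $\nu$ commutes with application and, under the usual variable convention, with $\l p.\_$ provided $\fv{p}$ is chosen fresh for $\var{\nu}$. The interesting case is \HBeta: here $M = (\l p.M_1) M_2$ with $\match{p}{M_2}{\theta}$ and $N = \theta M_1$. By Lemma~\ref{lem:match-compatible-subst}, $\match{p}{\nu M_2}{\gamma}$ with $\gamma = (\nu\theta)\mid_{\fv{p}}$, so a \HBeta\ step yields $\nu M \hred \gamma(\nu M_1)$. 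It then remains to argue $\gamma(\nu M_1) = \nu(\theta M_1)$, which follows from equation~\eqref{eq:technical-substitution-result} of the proof of Lemma~\ref{lem:dev-compatible-subst} applied to $\theta$ and the variable convention $\fv{p}\cap \var{\nu}=\emptyset$. This is the main obstacle, but it is exactly the same bookkeeping already carried out in Lemma~\ref{lem:dev-compatible-subst}.

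For part \partedos, the three rules defining $\hRedP{p}$ each reduce to an application of the corresponding IH. In \PatHead\ the premise $M\hred M'$ is handled by IH\,\parteuno, giving $\nu M \hred \nu M'$, and \PatHead\ rebuilds the conclusion (noting $p$ is a data pattern so the side condition is preserved). In \PatL\ we have $M = DM_2$, $p = dp'$ with $D \hRedP{d} D'$; applying IH\,\partedos\ we obtain $\nu D \hRedP{d} \nu D'$, and since $\nu$ applied to a data term is still a data term (substitution does not destroy the constructor spine), \PatL\ applies to yield $\nu M \hRedP{dp'} \nu D'\,\nu M_2$. The case \PatR\ is analogous, additionally using Lemma~\ref{lem:match-compatible-subst} to transport the side condition $\match{d}{D}{}$ to $\match{d}{\nu D}{}$.

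Nothing beyond the simultaneous induction and the substitution-composition identity already proved in Lemma~\ref{lem:dev-compatible-subst} is required; the only mildly delicate point is keeping the variable convention straight so that $\nu$ commutes with $\l p.\_$ and with the matching substitution in \HBeta.
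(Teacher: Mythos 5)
Your proposal is correct and follows exactly the route of the paper's own (sketched) proof: simultaneous induction on the two head-step relations, with Lemma~\ref{lem:match-compatible-subst} handling \HBeta\ and \PatR\ and the IH handling the remaining cases. You in fact supply more detail than the paper does, in particular the use of identity~\eqref{eq:technical-substitution-result} to reconcile $\gamma(\nu M_1)$ with $\nu(\theta M_1)$ in the \HBeta\ case, which is the right way to discharge that step.
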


\begin{proof} (sketch) \\
Both items are proved by simultaneous induction on $M \hred N$ and $M \hRedP{p} N$.

We use \reflem{match-compatible-subst} for case \HBeta, the IH
and \reflem{match-compatible-subst} for case \PatR, and
just the IH for the remaining cases.
\end{proof}

\section{H-developments}
\label{s:separability}
The aim of this section is to prove that all developments are h-developments. 

We found easier to prove separately that the h-development condition is compatible with the language constructs, diverging from the structure of the proofs in \cite{crary-std}.

\begin{lemma}[$\sepdev$ is compatible with abstraction] \abajo
\label{lem:sepdev-compatible-abs}
Let $M, N$ be terms such that $M \sepdev N$. Then $\l q.M \sepdev \l q.N$ for any pattern $q$.
\end{lemma}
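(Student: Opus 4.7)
The plan is to unfold the three clauses in the definition of $\sepdev$ for $\lambda q.M$ and $\lambda q.N$, and observe that a $\lambda$-abstraction is itself in head normal form for both $\hred$ and $\hRedP{p}$, so the leading head-reduction sequences in clauses \partedos\ and \partetres\ can be taken to be empty.

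Clause \parteuno\ is immediate: from $M \dev N$ (which is part of the hypothesis $M \sepdev N$), rule \DAbs\ yields $\lambda q.M \dev \lambda q.N$.

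For clause \partedos, first I would note that no rule of Definition~\ref{dfn:hred} produces a head step from a term of the form $\lambda q.M$: rules \HAppL, \HBeta, \HPat\ all require the term to be an application. Hence $\lambda q.M \hredn \lambda q.M$ (zero steps), and it suffices to show $\lambda q.M \idev \lambda q.N$, which follows directly from rule \IAbs\ applied to $M \dev N$. So I choose $Q = \lambda q.M$.

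For clause \partetres, fix an arbitrary pattern $p$. By a similar inspection of the rules defining $\hRedP{p}$ (namely \PatHead, \PatL, \PatR), a term $\lambda q.M$ admits no $\hRedP{p}$ step, since \PatHead\ needs an underlying head step and \PatL, \PatR\ need an application $DM'$. Thus take $Q_p = \lambda q.M$; it remains to derive $\lambda q.M \idev_p \lambda q.N$ by case analysis on $p$. If $p$ is a variable, use \PMatch\ (variable patterns match any term, and $\lambda q.M \dev \lambda q.N$ by clause \parteuno). If $p$ is a constant $c$, use \PConst\ together with the derivation $\lambda q.M \idev \lambda q.N$ obtained above. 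Finally, if $p = d p'$ is a compound data pattern, use \PNoCData, since $\lambda q.M \notin \setName{DataTerms}$ and $\lambda q.M \idev \lambda q.N$. The remaining $\idev_p$ rules \PCDataNoL, \PCDataNoR, \PCDataNoT\ do not apply as they require the source term to be an application.

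There is no real obstacle here; the only thing that must be checked with care is the exhaustive case split on $p$ in clause \partetres, to ensure that every shape of pattern is covered by one of \PMatch, \PConst, \PNoCData. Notice that the proof in fact only uses $M \dev N$ and not the full $\sepdev$ hypothesis, reflecting the fact that abstractions behave trivially with respect to both head and pattern-head reduction.
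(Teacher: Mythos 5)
Your proof is correct and follows essentially the same route as the paper's: clause \parteuno\ by \DAbs, clauses \partedos\ and \partetres\ by taking the intermediate term to be $\l q.M$ itself (empty head-reduction sequences) and deriving $\l q.M \idev \l q.N$ via \IAbs, then $\l q.M \idev_p \l q.N$ by the case split on $p$ using \PMatch, \PConst, \PNoCData. You merely make explicit what the paper leaves implicit, namely that abstractions admit no $\hred$ or $\hRedP{p}$ steps.
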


\begin{proof} 
	Part \parteuno\ trivially holds  by hyp. \parteuno\ and \DAbs.

	Part \partedos: by hyp. \parteuno\ and \IAbs\ we get $\l q.M \idev \l q.N$. Then $Q = \l q.M$.
	
	Part \partetres:  if $p \in Var$ then \PMatch\ applies, if $p$ is a constant or a compound data pattern then \PConst\ or \PNoCData\ apply respectively  as $(\l q.M) \idev (\l q.N)$. In all cases we obtain $(\l q.M) \idev_p (\l q.N)$.
        Then $Q = \l q.M$.	
\end{proof}

\begin{lemma}[$\sepdev$ is compatible with application] \abajo
\label{lem:sepdev-compatible-app}
Let $M_1, M_2, N_1, N_2$ be terms such that $M_1 \sepdev N_1$ and $M_2 \sepdev N_2$. 
Then $M_1 M_2 \sepdev N_1 N_2$.
\end{lemma}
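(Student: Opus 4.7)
The plan is to verify in turn the three defining clauses \parteuno, \partedos, \partetres\ of $M_1 M_2 \sepdev N_1 N_2$, drawing on the corresponding clauses of the two hypotheses together with the development rules just established. Clause \parteuno\ is immediate: \DApp\ applied to clause \parteuno\ of each hypothesis yields $M_1 M_2 \dev N_1 N_2$.

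For clause \partedos, hypothesis \partedos\ of $M_1$ supplies $Q_1$ with $M_1 \hredn Q_1 \idev N_1$; iterated \HAppL\ lifts the sequence to $M_1 M_2 \hredn Q_1 M_2$. I then split on whether $Q_1$ is an abstraction. If it is not, \IAppL\ together with clause \parteuno\ of $M_2 \sepdev N_2$ closes the argument. If $Q_1 = \l q.Q_1'$ (which can happen even when $M_1$ itself is not an abstraction, e.g.\ $M_1 = (\l x.\l y.y)\,z$), the $\idev$-rules force $N_1 = \l q.N_1'$ with $Q_1' \dev N_1'$; I then split on $\match{q}{M_2}{}$, closing with \IAppR\ + \PMatch\ when the match holds, and otherwise extending the head-reduction sequence via clause \partetres\ of $M_2 \sepdev N_2$ at pattern $q$, lifted to $Q_1 M_2$-steps by \HPat, before closing with \IAppR.

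For clause \partetres, fix a pattern $p$. If $p$ is a variable, \PMatch\ applied to clause \parteuno\ suffices with $Q_p := M_1 M_2$. If $p = c$ is a constant, the $Q$ constructed in clause \partedos\ already witnesses $M_1 M_2 \hRedPn{c} Q$ (the only rule producing $\hRedP{c}$-steps is \PatHead) and \PConst\ converts $Q \idev N_1 N_2$ into $Q \idev_c N_1 N_2$. The bulk of the work is the compound data pattern case $p = dp'$, which I would dispatch by a nested case analysis on the matching status and data-term status of the terms at hand: \PMatch\ when a direct match $\match{dp'}{M_1 M_2}{}$ holds; \PCDataNoT\ when $M_1$ and $M_2$ individually match $d$ and $p'$ but the combined match fails through non-linearity; \PCDataNoR\ composed with \PatR\ (justified by $\match{d}{M_1}{}$) and clause \partetres\ of $M_2 \sepdev N_2$ at pattern $p'$ when $M_1 \in \setName{DataTerms}$ matches $d$ but $\negMatch{p'}{M_2}$; symmetrically \PCDataNoL\ composed with \PatL\ when $M_1 \in \setName{DataTerms}$ but $\negMatch{d}{M_1}$; and \PNoCData\ reusing the clause-\partedos\ construction lifted via \PatHead\ when $M_1 \notin \setName{DataTerms}$ and the lifted $Q$ remains outside \setName{DataTerms}.

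The main obstacle is the last subcase, $p = dp'$ with $M_1 \notin \setName{DataTerms}$: head reductions can convert a non-data term into a data term (for instance $(\l x.c)\,y \hred c$), so \PNoCData\ may no longer apply at the endpoint of the sequence even though it was applicable at the source. The fix is to remain flexible about which $\idev_{dp'}$-introduction rule closes the argument, selecting among \PMatch, \PCDataNoT, \PCDataNoR\ and \PCDataNoL\ according to the matching relations that now obtain between the reduced term and the subpatterns $d$ and $p'$. The same flexibility is needed in the \PCDataNoL\ and \PCDataNoR\ subcases, because \reflem{hredp-only-if-negmatch} only preserves the non-matching invariant along intermediate stages of a $\hRedPn{p}$-sequence, not necessarily at its endpoint.
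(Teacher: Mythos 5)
Your clauses \parteuno\ and \partedos, and the variable/constant cases of \partetres, follow the paper's proof essentially step for step (the extra split on $\match{q}{M_2}{}$ in \partedos\ is harmless but unnecessary: clause \partetres\ of $M_2 \sepdev N_2$ covers both situations uniformly). The compound case $p=dp'$ is where you diverge, and where there is a genuine gap. You organise the case analysis around the \emph{source} $M_1$ and, in the subcase $M_1 \notin \setName{DataTerms}$, you drive the first component with the plain head reduction supplied by clause \partedos\ of $M_1 \sepdev N_1$, lifted via \PatHead. The problem is that $\hredn$ and $\hRedPn{d}$ genuinely diverge once the reduct of $M_1$ becomes a data term: $\hred$ stops there (a data term has no head redex), while $\hRedP{d}$ can continue \emph{inside} it via \PatL/\PatR. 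Concretely, take $M_1 = (\l x.x)(aR)$ with $R = (\l y.y)c$, $N_1 = ac$ and $d = ab$. Clause \partedos\ gives at best $M_1 \hredn aR \idev ac$, and $(aR)M_2$ is a data term with $\negMatch{ab}{aR}$; to close with \PCDataNoL\ you would need $aR \idev_{ab} ac$, hence $R \idev_b c$ — which is underivable, since $R \to c$ is precisely the head step that $\idev_b$ excludes (\PConst\ would need $R \idev c$, and no $\idev$ rule contracts the outer redex of $R$). Your proposed fix — ``remain flexible about which $\idev_{dp'}$-introduction rule closes the argument'' — cannot repair this, because what is missing is a \emph{premise} of the closing rules ($Q_1 \idev_d N_1$, or further $\hRedP{d}$-steps on $Q_1$ landing somewhere related to $N_1$), not the choice of closing rule; clause \partedos\ only ever yields $Q_1 \idev N_1$, and $\idev$ does not imply $\idev_d$ for data terms.

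The paper's proof avoids this by invoking clause \partetres\ of $M_1 \sepdev N_1$ \emph{at the pattern $d$} from the very start of the case $p = dp'$, obtaining $M_1 \hRedPn{d} Q_1 \idev_d N_1$ (in the example above this sequence continues one step further, to $ac$). The price is a more delicate lifting of that sequence to $M_1M_2 \hRedPn{dp'} Q_1M_2$: one splits it at the first data term $R_1$, lifts the non-data prefix as $\hred$-steps via \HAppL\ and \PatHead\ (using Lemma~\ref{lem:head-step-basics}), and the data suffix via \PatL. Only then does one case-analyse the \emph{endpoint} $Q_1$ (abstraction, non-data non-abstraction, or data term with the various match statuses), which is why the paper never needs your ``endpoint flexibility'' patch. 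Your treatment of the subcases where $M_1$ is already a data term is essentially sound (there clause \partetres\ at $d$ and your endpoint adjustments do supply the needed $\idev_d$/$\idev_{p'}$ facts), but the non-data subcase needs to be redone along the paper's lines.
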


\begin{proof}
	Part \parteuno\ is immediate by the hypotheses \parteuno\ and \DApp.

	Let's prove part \partedos.
	
	We first use hypothesis \partedos\ on $M_1 \sepdev N_1$ to obtain $M_1 \hredn Q_1 \idev N_1$ and subsequently apply \HAppL\ to $M_1 \hredn Q_1$ to get
	\begin{eqnarray}
		M_1 M_2 & \hredn & Q_1 M_2     \label{eq:comp-app-51}
	\end{eqnarray}
	Either $Q_1$ is an abstraction or not.
	
	\medskip
	Assume $Q_1$ is not an abstraction. Since $Q_1 \idev N_1$ and $M_2 \dev N_2$,  we apply \IAppL\ so that $Q_1 M_2 \idev N_1 N_2$; this together with \refeq{comp-app-51} gives the desired result.

	\medskip
	Now assume $Q_1 = \l p.Q_{12}$.	
	We use the hyp. \partetres\ on $M_2 \sepdev N_2$, obtaining $M_2 \hRedP{p}^* Q_2 \idev_p N_2$ 
	and then we apply \HPat\ to get
	\begin{eqnarray}
		Q_1 M_2 & \hredn & Q_1 Q_2					\label{eq:comp-app-61}
	\end{eqnarray} 
	Moreover, as $Q_1 = \l p.Q_{12} \idev N_1$, the only applicable rules are \IRefl\ or \IAbs, and in both cases $N_1 = \l p.N_{12}$ and $Q_{12} \dev N_{12}$. \\
	We now use \IAppR\ with premises $Q_{12} \dev N_{12}$ and $Q_2 \idev_p N_2$ to get 
	\begin{eqnarray}
		Q_1 Q_2 = (\l p.Q_{12}) Q_2 & \idev & (\l p.N_{12}) N_2 = N_1 N_2
		\label{eq:comp-app-62}
	\end{eqnarray} 
	The desired result is obtained by \refeq{comp-app-51}, \refeq{comp-app-61} and \refeq{comp-app-62}.
	
	\bigskip
	Let's prove part \partetres. 

	\medskip	
	If $p \in Var$ we are done by \parteuno\ and \PMatch; we thus get 
        $M_1 M_2 \idev_p N_1 N_2$ so that $Q = M_1 M_2$.

	\medskip	
	If $p = c$ then using \partedos\ we obtain
	$M_1 M_2 \hredn Q \idev N_1 N_2$ for some $Q$ 
	; we apply \PatHead\ and \PConst\ to get $M_1 M_2 \hRedPn{c} Q$ and $Q \idev_c N_1 N_2$ respectively, concluding the proof for this case.
	
	\bigskip
	Consider $p = p_1 p_2$ with $p_1$ a data pattern and $p_2$ a pattern.

  We use the hyp.  \partetres\ on $M_1 \sepdev N_1$, getting 
			$M_1 \hRedPn{p_1} Q_1 \idev_{p_1} N_1$.
	Let us define $R_1$ as follows: if there is a data term in the sequence $M_1 \hRedPn{p_1} Q_1$
        then $R_1$ is the first of such terms; 
        otherwise $R_1$ is $Q_1$. In both cases $M_1 \hRedPn{p_1} R_1 \hRedPn{p_1} Q_1$. We necessarily have
         $M_1 \hredn R_1$ by \PatHead, then  $M_1 M_2 \hredn R_1 M_2$ by \HAppL\ 
        and subsequently  $M_1 M_2 \hRedP{p} R_1 M_2$ by \PatHead.
	
	We conclude $M_1 M_2 \hRedPn{p} Q_1 M_2$, trivially if $Q_1 = R_1$, and applying \PatL\ to $R_1 \hRedPn{p_1} Q_1$ to obtain $R_1 M_2 \hRedPn{p} Q_1 M_2$ otherwise.

	\medskip
	If $Q_1 = (\l q.Q_1')$ then we use the hyp.  \partetres\ on $M_2 \sepdev N_2$ getting $M_2 \hRedPn{q} Q_2 \idev_{q} N_2$.
	
	We apply \HPat\ to $M_2 \hRedP{q}^* Q_2$ getting $Q_1 M_2 \hredn Q_1 Q_2$; therefore we obtain $Q_1 M_2 \hRedPn{p} Q_1 Q_2$ by  \PatHead.
		
	In the other side $Q_1 = (\l q.Q_1') \dev N_1$, therefore $N_1 = (\l q.N_1')$ and $Q_1' \dev N_1'$.
	
	We apply \IAppR\ to $Q_1' \dev N_1'$ and $Q_2 \idev_{q} N_2$ to obtain $Q_1 Q_2 \idev N_1 N_2$, therefore $Q_1 Q_2 \idev_p N_1 N_2$ by \PNoCData. We thus get the desired result taking $Q_p = Q_1 Q_2$.
	
	\medskip
	If $Q_1$ is not an abstraction and $Q_1 \notin \setName{DataTerms}$, then only \PConst\ or 
        \PNoCData\ can justify $Q_1 \idev_{p_1} N_1$, thus implying $Q_1 \idev N_1$; this together with
        the hypothesis \parteuno\ $M_2 \dev N_2$ gives $Q_1 M_2 \idev N_1 N_2$
        by \IAppL, hence  $Q_1 M_2 \idev_p N_1 N_2$ by \PNoCData. We get the desired result by taking $Q_p = Q_1 M_2$.

	\medskip
	If $Q_1 \in \setName{DataTerms}$ we anaylise the different 
        alternatives for the matching between $p_1 p_2$ and $Q_1 M_2$.
	
	Assume $\negMatch{p_1}{Q_1}$. In this case we apply \PCDataNoL\ to 
	$Q_1 \idev_{p_1} N_1$ and $M_2 \dev N_2$ to obtain $Q_1 M_2 \idev_p N_1 N_2$ and thus the 
        desired result holds by taking $Q_p = Q_1 M_2$.
	
	Assume $\match{p_1}{Q_1}{}$ and $\negMatch{p_2}{M_2}$.
	In this case we use the hyp. \partetres\ on $M_2 \sepdev N_2$ to get
	$M_2 \hRedPn{p_2} Q_2 \idev_{p_2} N_2$, then apply \PatR\ to get 
        $Q_1 M_2 \hRedPn{p} Q_1 Q_2$. Finally from $Q_1 \idev_{p_1} N_1$ and $Q_2 \idev_{p_2} N_2$ we 
        obtain $Q_1 Q_2 \idev_p N_1 N_2$ by either  \PCDataNoR, \PCDataNoT\ or \PMatch. 
       We get the desired result by taking $Q_p = Q_1 Q_2$.
	
	Finally assume $\match{p_1}{Q_1}{}$ and $\match{p_2}{Q_2}{}$.
	In this case the hypotheses imply in particular $Q_1 \dev N_1$ and $M_2 \dev N_2$. We thus conclude $Q_1 M_2 \idev_{p} N_1 N_2$ using either \PMatch\ or \PCDataNoT\ (depending on whether $\match{p}{Q_1 M_2}{}$ or not), getting the desired result by taking $Q_p = Q_1 M_2$.

\end{proof}

\bigskip
Now we proceed with the proof of the h-development property. The
generalization of the statement involving $\subsepdev$ is needed to
conclude the proof\footnote{In \cite{crary-std} the compatibility of
  h-development with substitutions is stated as a separate lemma; for
  pattern calculi we could not find a proof of compatibility with
  substitution independent of the main h-development result.}, as
can be seen in the \DBeta\ case below.

\begin{lemma}[Generalized h-developments property] \abajo
\label{lem:crary-6-generalized}
Let $M, N$ be terms and $\nu, \theta$ substitutions, such that $M \dev N$ and $\nu \subsepdev \theta$. \\
Then $\nu M \sepdev \theta N$
\end{lemma}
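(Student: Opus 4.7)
The plan is to proceed by induction on the derivation $M \dev N$, analysing the last rule used. Cases \DAbs\ and \DApp\ are immediate from the IH together with the compatibility lemmas \reflem{sepdev-compatible-abs} and \reflem{sepdev-compatible-app}, after an $\alpha$-conversion arranging $\fv{p}$ to be disjoint from $\var{\nu} \cup \var{\theta}$ so the outer substitutions distribute cleanly across the constructor. For the base case \DRefl\ the main IH is unavailable, so I would dispatch it by an inner induction on $M$: for a variable $x \in \dom(\nu)$ the hypothesis $\nu \subsepdev \theta$ applies directly; for variables outside $\dom(\nu)$ and for constants, the three clauses of $\sepdev$ reduce to exhibiting $t \sepdev t$, which is witnessed by \DRefl, \IRefl, and (for clause \partetres) by \PMatch, \PConst\ or \PNoCData\ according to whether the pattern is a variable, a constant, or a compound data pattern; the abstraction and application subcases close via the same two compatibility lemmas.

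The principal obstacle is \DBeta, where $M = (\l p.M_1) M_2$ and $N = \tau' M_1'$, with premises $M_1 \dev M_1'$, $\tau \sdev \tau'$ and $\match{p}{M_2}{\tau}$. First I apply the main IH to each $\tau x \dev \tau' x$ under $\nu \subsepdev \theta$, obtaining $\nu(\tau x) \sepdev \theta(\tau' x)$; combined with the hypothesis on variables outside $\fv{p}$ this yields the \emph{promoted} relation $\nu \tau \subsepdev \theta \tau'$. Applying the main IH to $M_1 \dev M_1'$ with this promoted pair gives $(\nu \tau) M_1 \sepdev (\theta \tau') M_1'$, which by identity \refeq{technical-substitution-result} (taking $\fv{p}$ fresh) coincides with $\sigma(\nu M_1) \sepdev \theta N$, where $\sigma = (\nu \tau)\mid_{\fv{p}}$. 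By \reflem{match-compatible-subst} we have $\match{p}{\nu M_2}{\sigma}$, so \HBeta\ supplies the single head step $\nu M = (\l p.\nu M_1)(\nu M_2) \hred \sigma(\nu M_1)$. Clause \parteuno\ of $\sepdev$ now follows from \reflem{dev-compatible-subst}; clause \partedos\ by prepending this head step to the clause-\partedos\ witness of $\sigma(\nu M_1) \sepdev \theta N$; clause \partetres\ by case analysis on $p'$: when $p'$ is a variable take $Q_{p'} = \nu M$ and apply \PMatch\ directly (using \reflem{dev-compatible-subst}); when $p'$ is a data pattern lift the head step through \PatHead\ to obtain $\nu M \hRedP{p'} \sigma(\nu M_1)$, then chain with the clause-\partetres\ witness of $\sigma(\nu M_1) \sepdev \theta N$.

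The crux of the difficulty is the absorption of the matching substitution $\tau$ into the outer substitution pair, which is precisely why the statement must quantify over all pairs $\nu \subsepdev \theta$: a version fixing a single pair would not supply the IH hypotheses required at \DBeta, and (as the paper's own footnote anticipates) this is also why compatibility of $\sepdev$ with substitution cannot be proved independently of the main result.
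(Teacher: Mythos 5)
Your proposal is correct and follows essentially the same route as the paper's proof: an outer induction on the derivation of $M \dev N$ with an inner induction on $M$ for \DRefl, the two compatibility lemmas for \DAbs\ and \DApp, and in the \DBeta\ case the promotion $\nu\tau \subsepdev \theta\tau'$ obtained from the IH on the components of $\tau \sdev \tau'$, followed by the IH on the body and the prepending of the single head step (lifted through \PatHead\ for data patterns, with \PMatch\ handling variable patterns). The only cosmetic divergence is that you build the head step directly on $\nu M$ via \reflem{match-compatible-subst} and \HBeta, whereas the paper derives $M \hred \tau M_{12}$ and pushes $\nu$ through with \reflem{hred-compatible-subst}; both yield the same reduct.
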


\begin{proof}
By induction on $M \dev N$ analyzing the rule used in the last step of the derivation. 

\begin{description}
	\item[DRefl] 
	in this case $N = M$, we proceed by induction on $M$
	\newcommand{\sepdevamplio}{\; \sepdev \;}
	\begin{itemize}
		\item 
		$M = x \in Dom(\nu)$, in this case $\nu M = \nu x \sepdevamplio \theta x = \theta N$ by hypothesis.
		
		\item 
		$M = x \notin Dom(\nu)$, in this case $\nu M = x \sepdevamplio x = \theta N$.
		
		\item 
		$M = M_1 M_2$, in this case$\nu M_1 \sepdev \theta M_1$ and $\nu M_2 \sepdev \theta M_2$
                hold by the IH. The desired result is obtained by \reflem{sepdev-compatible-app}.
		
		\item
		$M = \l p.M_1$, in this case $\nu M_1 \sepdev \theta M_1$ holds by the IH. The desired result 
                is obtained by \reflem{sepdev-compatible-abs}.
	\end{itemize}

	\item[DAbs]
	in this case $M = \l p.M_1, N = \l p.N_1,  M_1 \dev N_1$. 

	Using the IH on $M_1 \dev N_1$ we obtain $\nu M_1 \sepdev \theta N_1$, the desired result is obtained  by \reflem{sepdev-compatible-abs}.
		
	\item[DApp] 
	in this case $M = M_1 M_2, N = N_1 N_2, M_i \dev N_i$.
	
	Using the IH on both rule premises we obtain $\nu M_i \sepdev \theta N_i$, the desired result is obtained  by \reflem{sepdev-compatible-app}.
	
	\item[DBeta] 
	Let's write down the rule instantiation
	
	$\regla
		{ M_{12} \dev N_{12}   \quad
		  \tau \sdev \tau'   \quad
		  \match{q}{M_2}{\tau} }
		{ M = (\l q.M_{12}) M_2 \,\dev\, \tau' N_{12} = N }
		{}
	$	 
	
	\parteuno\ can be obtained by hypotheses $M \dev N$ and $\nu \subsepdev \theta$, and then \reflem{dev-compatible-subst}.
	
	For [ \partetres\ if $p \in Var$ ] we are done by \parteuno\  and \PMatch.
	
	For [ \partetres\ if $p = d$ ] and also for \partedos\ : we know both 
	$ M \hred \tau M_{12}$ and $M \hRedP{p} \tau M_{12} $,
	then by \reflem{hred-compatible-subst}
	\begin{eqnarray}
	\nu M \hred \nu (\tau M_{12}) & \textand & \nu M \hRedP{p} \nu (\tau M_{12})
	\label{eq:lemma6-dbeta-01}
	\end{eqnarray}
	
	We apply the IH on each $\tau x \dev \tau' x$, obtaining $(\nu \tau) x = \nu (\tau x) \sepdevamplio \theta (\tau' x) = (\theta \tau') x$ for all $x \in Dom(\tau)$. 
	Moreover, if $x \in Dom(\nu) - Dom(\tau)$ then 
        $(\nu \tau) x = \nu x \sepdevamplio \theta x = (\theta \tau') x$  by hypothesis.
	
	Consequently, $\nu \tau \subsepdev \theta \tau'$. Now we  use the  IH on $M_{12} \dev N_{12}$ taking $\nu \tau \subsepdev \theta \tau'$ as second hypothesis to obtain
	$$ \nu (\tau M_{12}) = (\nu \tau) M_{12} \sepdevamplio (\theta \tau') N_{12} = \theta (\tau' N_{12}) = \theta N $$
	
	This result along with \refeq{lemma6-dbeta-01} concludes the proof for both parts.
\end{description}
\end{proof}

\begin{corollary}[H-development property] \abajo
\label{lem:crary-6}
Let $M, N$ be terms such that $M \dev N$. 
Then $M \sepdev N$.
\end{corollary}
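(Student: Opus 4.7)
The plan is to derive this corollary as an immediate instance of Lemma~\ref{lem:crary-6-generalized}, the whole point of whose generalization was to be able to carry an induction through the \DBeta\ case. Since Lemma~\ref{lem:crary-6-generalized} gives $\nu M \sepdev \theta N$ whenever $M \dev N$ and $\nu \subsepdev \theta$, the natural move is to instantiate $\nu$ and $\theta$ to the identity on the variables of interest, so that $\nu M$ collapses to $M$ and $\theta N$ collapses to $N$.

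Concretely, I would take $\nu = \theta = \emptyset$, the substitution with empty domain. To invoke Lemma~\ref{lem:crary-6-generalized} I need to check $\emptyset \subsepdev \emptyset$: by definition this requires $\dom(\emptyset) = \dom(\emptyset)$, which is trivial, and $\forall x \in \dom(\emptyset) \metadot \emptyset x \sepdev \emptyset x$, which is vacuously true. Then Lemma~\ref{lem:crary-6-generalized} applied to the given $M \dev N$ yields $\emptyset M \sepdev \emptyset N$, that is, $M \sepdev N$, as desired.

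There is essentially no obstacle here: all of the real work—splitting a development into a head-reduction prefix followed by an internal development, both in the unadorned and in the pattern-indexed versions—has already been done inside Lemma~\ref{lem:crary-6-generalized}, whose statement was deliberately strengthened with the substitution parameters $\nu \subsepdev \theta$ precisely so that the \DBeta\ case could be closed by feeding $\nu \tau \subsepdev \theta \tau'$ back into the inductive hypothesis. The corollary is just the $\nu = \theta = \emptyset$ specialisation, so the proof is a one-line invocation.
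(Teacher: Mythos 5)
Your proof is correct and matches the paper's intent exactly: the corollary is stated as an immediate consequence of Lemma~\ref{lem:crary-6-generalized}, obtained by specialising to the empty substitutions, for which $\emptyset \subsepdev \emptyset$ holds vacuously. Nothing further is needed.
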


\section{Standardisation}
\label{s:standardisation}
The part of the standardisation proof following the proof of the h-development property coincides in structure with the proof given in \cite{crary-std}.

\medskip
First we will prove that we can get, for any reduction involving head steps that follows an internal development, another reduction in which the head steps are at the beginning. The name given to the Lemma~\ref{lem:crary-7} was taken from \cite{crary-std}.

This proof needs again to consider explicitly the relations relative to patterns, for similar reasons to those described when introducing h-development in section \ref{s:concepts}.

\begin{lemma}[Postponement] \hspace{1cm}
\label{lem:crary-7}
\begin{itemize}
	\item[\parteuno\ ] if $M \idev N \hred R$ 
	           then there exists some term $N'$
	           such that $M \hred N' \dev R$
	\item[\partedos\ ] for any pattern $p$, 
	           if $M \idev_p N \hRedP{p} R$ 
	           then there exists some term $N'_p$ 
	           such that $M \hRedP{p} N'_p \dev R$ 
\end{itemize}
\end{lemma}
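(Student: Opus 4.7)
The proof proceeds by simultaneous induction on the derivations of $M \idev N$ (for \parteuno) and $M \idev_p N$ (for \partedos), case-splitting on the last rule applied in each. In every case we determine which head-rule could have produced $N \hred R$ (resp.\ $N \hRedP{p} R$), and then either discharge the case as vacuous, invoke the induction hypothesis on a strictly smaller subderivation, or reconstruct the required head step at $M$ directly using the auxiliary lemmas of Section~\ref{s:prelimResults}.

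For part \parteuno, \IRefl\ is immediate (take $N' = R$) and \IAbs\ is vacuous since no head-rule reduces an abstraction. In \IAppL\ the step must be \HAppL, because $\idev$ preserves the ``non-abstraction head'' shape of an application; the induction hypothesis \parteuno\ applied to the left factor then closes the case with \HAppL\ and \DApp. The interesting case is \IAppR, where $M = (\l p.M_1) M_2 \idev (\l p.M_1') M_2' = N$ and $M_2 \idev_p M_2'$. A \HPat-step at $N$ is handled by the induction hypothesis \partedos\ on $M_2 \idev_p M_2' \hRedP{p} M_2''$. A \HBeta-step at $N$ produces $\match{p}{M_2'}{\theta'}$ and $R = \theta' M_1'$; here \reflem{idevp-cannot-obtain-match} (contrapositively) recovers $\match{p}{M_2}{\theta}$, \reflem{dev-cannot-lose-match} together with uniqueness of matches forces $\theta \sdev \theta'$, the outer \HBeta\ fires at $M$ yielding $\theta M_1$, and \reflem{dev-compatible-subst} delivers $\theta M_1 \dev \theta' M_1' = R$.

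For part \partedos, \PMatch\ is vacuous: \reflem{dev-cannot-lose-match} would transport $\match{p}{M}{}$ to $N$, contradicting $N \hRedP{p} R$ via \reflem{hredp-only-if-negmatch}. In \PConst\ (pattern $c$) and \PNoCData\ (where $N \notin \setName{DataTerms}$) only a \PatHead-step can apply, so the induction hypothesis \parteuno\ on the underlying $\idev$-subderivation, re-wrapped with \PatHead, closes both cases. In \PCDataNoL\ (pattern $dq$, $N = D'N'$, $\negMatch{d}{D}$), \reflem{idevp-cannot-obtain-match} rules out \PatR\ and a \PatL-step is handled by the induction hypothesis \partedos\ on the $\idev_d$-subderivation, re-wrapped with \PatL\ and \DApp; the \PCDataNoR\ case is symmetric through \PatR. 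Finally, \PCDataNoT\ is vacuous: \reflem{dev-cannot-lose-match} transports the partial matches $\match{d}{D}{}$ and $\match{q}{M'}{}$ to $D'$ and $N'$, while the non-linearity-induced domain clash $\fv{d} \cap \fv{q} \neq \emptyset$ is invariant under development, so $\negMatch{dq}{D'N'}$ persists and rules out \PatL, \PatR, and \PatHead\ (the latter because $D'N'$ is a data term).

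The main obstacle is the \IAppR/\HBeta subcase, since it is the only place where the head redex appears \emph{new} at the outermost position of $N$ after an internal development inside the argument: the proof must recover a pre-existing match at $M_2$ from the post-development match at $M_2'$ and align the two matching substitutions through $\sdev$, making visible the fine interplay between matching and development at the head-step level. A secondary subtlety is the vacuity of \PCDataNoT, whose correctness relies on noticing that the non-linearity obstruction depends only on the pattern-level domains $\fv{d}$ and $\fv{q}$, and therefore persists across arbitrary developments of the underlying data-term and argument.
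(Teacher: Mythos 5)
Your proof is correct and follows essentially the same route as the paper's: a simultaneous induction with the same case analysis, the same use of \reflem{idevp-cannot-obtain-match}, \reflem{dev-cannot-lose-match}, \reflem{hredp-only-if-negmatch} and \reflem{dev-compatible-subst}, differing only in bookkeeping (you organise \partedos\ by the last $\idev_p$-rule and split \IAppR\ on the head rule at $N$, where the paper organises by the shape of $p$ and splits on whether $\match{p}{M_2}{}$). One phrasing quibble in \PCDataNoT: it is the \emph{transported component matches} $\match{d}{D'}{}$ and $\match{q}{N'}{}$ (via \reflem{hredp-only-if-negmatch}), together with $D'N'$ being a data term, that rule out \PatL, \PatR\ and \PatHead\ --- not the persistence of $\negMatch{dq}{D'N'}$, which by itself would be compatible with those rules firing.
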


\begin{proof}
For \parteuno, if the rule used in $M \idev N$ is \IRefl, then the result is immediate taking $N' = R$. Therefore, in the following we will ignore this case.

We prove \parteuno\ and \partedos\ by simultaneous induction on $M$ taking into account the previous observation. 

\begin{description}
	\item[variable] in this case it must be $N = M$ for both \parteuno\ and \partedos\, and neither $M \hred R$ nor $M \hRedP{p} R$ for any $p, R$.

	\item[abstraction] in this case $N$ must also be an abstraction for both \parteuno\ and \partedos\, and neither $N \hred R$ nor $N \hRedP{p} R$ for any $p, R$.

	\item[application] 
	in this case $M = M_1 M_2$
	
	We prove \parteuno\ first, analysing the possible forms of $M_1$
	
	\begin{itemize}
		\item 
		Assume $M_1$ is not an abstraction
		
		In this case \IAppL\ applies, so we know $N = N_1 N_2$, $M_1 \idev N_1$, and $M_2 \dev N_2$.
		
		Since $M_1 \idev N_1$, $N_1$ is not an abstraction, then the only applicable rule for $N \hred R$ is \HAppL, hence $R = R_1 N_2$ and $N_1 \hred R_1$.
		
		Now we use the IH on $M_1 \idev N_1 \hred R_1$ to get $M_1 \hred N_1' \dev R_1$, then we obtain $M = M_1 M_2 \hred N_1' M_2$ by \HAppL.
		
		Finally we apply \DApp\ to $N_1' \dev R_1$ and $M_2 \dev N_2$ to get $N_1' M_2 \dev R_1 N_2 = R$, which concludes the proof for this case.

		\item 
		Now assume $M_1 = \l p.M_{12}$ and $\negMatch{p}{M_2}$
		
		Since $M = (\l p.M_{12}) M_2 \idev N$, the only rule
                that applies is \IAppR, then $N = (\l p.N_{12}) N_2$,
                $M_{12} \dev N_{12}$, and $M_2 \idev_p N_2$. By
                \reflem{idevp-cannot-obtain-match} we obtain
                $\negMatch{p}{N_2}$, so the only applicable rule in
                $N = (\l p.N_{12}) N_2 \hred R$ is \HPat, then $R =
                (\l p.N_{12}) R_2$ and $N_2 \hRedP{p} R_2$.
		
		Now we use the IH \partedos\ on $M_2 \idev_p N_2 \hRedP{p} R_2$, to get $M_2 \hRedP{p} N_2' \dev R_2$.
		
		We obtain $M = (\l p.M_{12}) M_2 \hred (\l p.M_{12}) N_2'$ by \HPat, then we get $(\l p.M_{12}) \dev (\l p.N_{12})$ by \DAbs\ on $M_{12} \dev N_{12}$, 
		finally we apply \DApp\ to the previous result and $N_2' \dev R_2$ 
		to obtain $(\l p.M_{12}) N_2' \dev (\l p.N_{12}) R_2 = R$ which concludes the proof for this case.

		\item 
		Finally, assume $M_1 = \l p.M_{12}$ and $\match{p}{M_2}{\nu}$
		
		Again, the only rule that applies  in $M = (\l
                p.M_{12}) M_2 \idev N$ is \IAppR, then $N = (\l
                p.N_{12}) N_2$, $M_{12} \dev N_{12}$, and $M_2 \idev_p
                N_2$.  Now, by \reflem{dev-cannot-lose-match} we
                obtain $\match{p}{N_2}{\theta}$
                for some substitution $\theta$ such that $\nu \sdev
                \theta$, then the applied rule in $N \hred R$ is
                \HBeta\ (the case \HPat\ being excluded by
                \reflem{hredp-only-if-negmatch}), hence $R = \theta
                N_{12}$
		
		It is clear that $M \hred \nu M_{12}$. By
                \reflem{dev-compatible-subst} we obtain $\nu M_{12} \dev \theta N_{12} = R$,
                which concludes the proof for this case.
	\end{itemize}

	\bigskip
	For \partedos\ we proceed by a case analysis of $p$ 
	
	\medskip	
	If $p \in Var$ then there is no $R$ such that $N \hRedP{p} R$ for any term $N$. 

	\medskip
	If $\match{p}{M}{}$ then by \reflem{dev-cannot-lose-match} $\match{p}{N}{}$, and therefore by \reflem{hredp-only-if-negmatch} there can be no $R$ such that $N \hRedP{p} R$.
	
	\medskip	
	If $p = c$ then $\negMatch{p}{M}$, hence $M \idev_p N
        \hRedP{p} R$ implies $M \idev N \hred R$ as \PConst\ and
        \PatHead\ are the only possibilities for this case
        respectively. We use part \parteuno\ to obtain $M
        \hred N' \dev R$, and $M \hRedP{p} N'$ by \PatHead\ which
        concludes the proof for this case.
	
	\medskip
		If $p = d \, p_2$ and $M \notin \setName{DataTerms}$, then the only possibilities for $M \idev_p N \hRedP{p} R$ are \PNoCData\ and \PatHead\ respectively, then $M \idev N \hred R$.
		We use part \parteuno\  to obtain $M \hred N' \dev R$, and $M \hRedP{p} N'$ by \PatHead\ which concludes the proof for this case.
	
	\medskip
	Now assume $p = d \, p_2$, $M \in \setName{DataTerms}$, and $\negMatch{p}{M}$. We must analyse three possibilities 
	
	\begin{itemize}
		\item 
		$\negMatch{d}{M_1}$. \\
		In this case only \PCDataNoL\ applies  for $M \idev_p N$, therefore $N = N_1 N_2$ with $M_1 \idev_{d} N_1$ and $M_2 \dev N_2$. By \reflem{idevp-cannot-obtain-match} we know $\negMatch{d}{N_1}$ and moreover $N_1$ is a data term (as can be seen by \reflem{dev-data}) thus not having head redexes, so the only possible rule for $N \hRedP{p} R$ is \PatL, then $R = R_1 N_2$ with $N_1 \hRedP{d} R_1$.
				
		Now we use the IH on the derivation $M_1 \idev_{d} N_1 \hRedP{d} R_1$ to get $M_1 \hRedP{d} N_1' \dev R_1$, therefore $M = \appfrom{M} \hRedP{p} N_1' M_2$ by \PatL.
		
		Moreover as $N_1' \dev R_1$ and $M_2 \dev N_2$ hence $N_1' M_2 \dev R_1 N_2 = R$, which concludes the proof for this case.
		
		\item 
		$\match{d}{M_1}{}$ and $\negMatch{p_2}{M_2}$. \\ In
                  this case only \PCDataNoR\ applies  for $M \idev_p
                  N$, therefore $N = N_1 N_2$ with $M_1 \dev N_1$ and
                  $M_2 \idev_{p_2} N_2$.  By
                  \reflem{dev-cannot-lose-match} and
                  \reflem{idevp-cannot-obtain-match} respectively, we
                  obtain both $\match{d}{N_1}{}$ and
                  $\negMatch{p_2}{N_2}$.  Moreover $N$ is a data term
                  (as can be seen by \reflem{dev-data}) thus not
                  having head redexes. Hence the only possibility for
                  $N \hRedP{p} R$ is \PatR, then $R = N_1 R_2$ with
                  $N_2 \hRedP{p_2} R_2$
		
	We now use the IH on $M_2 \idev_{p_2} N_2 \hRedP{p_2} R_2$ to get $M_2 \hRedP{p_2} N_2' \dev R_2$, and by \PatR\ $M = \appfrom{M} \hRedP{p} M_1 N_2'$
		
		We also use \DApp\ on $M_1 \dev N_1$ and $N_2' \dev R_2$ to get $M_1 N_2' \dev N_1 R_2 = R$, which concludes the proof for this case.

    \item 
		$\match{d}{M_1}{}$, $\match{p_2}{M_2}{}$ and $\negMatch{dp_2}{M_1M_2}$. 
             
                $\match{d}{M_1}{}$ implies
                  (L~\ref{lem:calculus-basics}:(\ref{bfc:d})) $M_1 \in
                  \setName{DataTerms}$ so that from $M = M_1 M_2
                  \idev_p N$ we can only have $N = N_1 N_2$ with $M_1
                  \dev N_1$ and $M_2 \dev N_2$.
                  L.~\ref{lem:dev-cannot-lose-match} gives
                  $\match{d}{N_1}{}$ and $\match{p_2}{N_2}{}$.
                  L.~\ref{lem:dev-data}:(\ref{dev-date:dos}) gives
                  $N \in 
                  \setName{DataTerms}$. To show 
                  $N \hRedP{p} R$ we have three possibilities:
                  $\PatHead$ is not possible since $N \in 
                  \setName{DataTerms}$ (c.f. L~\ref{lem:head-step-basics}:(\ref{bf:a})), 
                  $\PatL$ is not possible since $\match{d}{M_1}{}$ (c.f. L~\ref{lem:hredp-only-if-negmatch}),
                  $\PatR$ is not possible since $\match{p_2}{N_2}{}$
                   (c.f. L~\ref{lem:hredp-only-if-negmatch}).
	\end{itemize}

\end{description}
\end{proof}

\begin{corollary}
\label{lem:crary-8} \abajo
Let $M, N, R$ be terms such that $M \idev N \hred R$. Then $\metaexists N' \sthat 
M \hredn N' \idev R$.
\end{corollary}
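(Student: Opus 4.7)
The plan is to combine the Postponement Lemma (Lemma~\ref{lem:crary-7}) with the H-development property (Corollary~\ref{lem:crary-6}). Lemma~\ref{lem:crary-7}(i) already postpones a single head step past an internal development, at the cost of replacing the internal development by an arbitrary development. The h-development property allows us to recover from an arbitrary development a prefix of head steps followed by an internal development. Composing these two ingredients yields the claim.

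More precisely, starting from $M \idev N \hred R$, I would first apply Lemma~\ref{lem:crary-7}(i) to obtain a term $N_1$ such that
\[
M \hred N_1 \dev R.
\]
At this point the tail is merely a development, not an internal development, so one further step is needed. I would then invoke Corollary~\ref{lem:crary-6} on $N_1 \dev R$ to obtain $N_1 \sepdev R$. By clause \textit{(ii)} of the definition of $\sepdev$, this gives a term $N'$ such that
\[
N_1 \hredn N' \idev R.
\]
Concatenating the head reduction $M \hred N_1 \hredn N'$ with $N' \idev R$ yields $M \hredn N' \idev R$, which is exactly what we want.

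I do not expect any obstacle here: the corollary is essentially the packaging of Lemma~\ref{lem:crary-7}(i) with Corollary~\ref{lem:crary-6}, and no inductive argument is required beyond those already established. The only mild subtlety is notational: the $N'$ appearing in the statement of the corollary is \emph{not} the intermediate term produced by postponement, but rather the result of further h-developing the tail obtained after postponement.
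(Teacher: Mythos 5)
Your proposal is correct and is exactly the paper's argument: the paper proves this corollary in one line by citing Lemma~\ref{lem:crary-7} and Corollary~\ref{lem:crary-6}, and your write-up simply spells out that composition (postpone the head step, then h-develop the resulting tail $N_1 \dev R$ into $N_1 \hredn N' \idev R$). The notational caveat you raise about $N'$ is apt but harmless.
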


\begin{proof}
Immediate by \reflem{crary-7} and Corollary~\ref{lem:crary-6}.
\end{proof}

\bigskip\noindent
Now we generalize the h-development concept to a sequence of developments. The name given to Lemma~\ref{lem:crary-9} was taken from \cite{crary-std}.

\begin{lemma}[Bifurcation] 
\label{lem:crary-9} \abajo
Let $M, N$ be terms such that $M \devn N$. Then $M \hredn R \idevn N$ for some term $R$.
\end{lemma}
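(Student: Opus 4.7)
The plan is to proceed by induction on the length $k$ of the development sequence $M = M_0 \dev M_1 \dev \cdots \dev M_k = N$. The base case $k = 0$ is immediate taking $R = M$, since then $M \hredn M \idevn M$ using the empty sequence on both sides (recall that $\IRefl$ yields $M \idev M$).

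For the inductive step, I split the sequence as $M \dev M_1 \devn N$ where the tail has length $k - 1$. First I would apply Corollary~\ref{lem:crary-6} (the h-development property) to the first step $M \dev M_1$, obtaining some term $Q$ with $M \hredn Q \idev M_1$. Then the induction hypothesis applied to $M_1 \devn N$ yields $M_1 \hredn R' \idevn N$ for some $R'$. Stringing these together produces the chain
\[
M \hredn Q \idev M_1 \hredn R' \idevn N,
\]
which is not yet in the required form because the block of head steps $M_1 \hredn R'$ sits to the right of an internal development step $Q \idev M_1$.

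The core of the argument is then to commute these blocks, pushing the head steps leftward past $\idev$. This is exactly what Corollary~\ref{lem:crary-8} provides for a single head step, so I would iterate it along the reduction $M_1 \hred S_1 \hred \cdots \hred S_n = R'$: starting from $Q \idev M_1$, apply the corollary to $Q \idev M_1 \hred S_1$ to get $Q \hredn Q_1 \idev S_1$; then to $Q_1 \idev S_1 \hred S_2$ to get $Q_1 \hredn Q_2 \idev S_2$; and so on. After $n$ iterations we obtain $Q \hredn Q_n \idev R'$. Concatenating with the initial segment $M \hredn Q$ and the trailing $R' \idevn N$ yields $M \hredn Q_n \idevn N$, completing the induction with $R = Q_n$.

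I do not anticipate a serious obstacle: the work has essentially been done already in Corollaries~\ref{lem:crary-6} and~\ref{lem:crary-8}, and the proof consists in assembling them via a straightforward double induction (outer on the length of $\devn$, inner on the length of $\hredn$ when commuting). The only point to be careful about is that Corollary~\ref{lem:crary-8} produces a multi-step $\hredn$ rather than a single $\hred$, so the iteration must treat $Q \hredn Q_i$ as accumulating all head steps extracted so far, which is harmless since $\hredn$ is closed under concatenation.
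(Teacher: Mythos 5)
Your proof is correct and follows essentially the same route as the paper: split off the first development step, apply Corollary~\ref{lem:crary-6} to it and the induction hypothesis to the tail, then iterate Corollary~\ref{lem:crary-8} to push the block of head steps leftward past the single $\idev$ step. The bookkeeping about $\hredn$ being closed under concatenation is exactly what the paper leaves implicit in its phrase ``use Corollary~\ref{lem:crary-8} (many times)''.
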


\begin{proof}
Induction on the length of $M \devn N$. If $M = N$ the result holds trivially.

Assume $M \dev Q \devn N$. By C.~\ref{lem:crary-6} and IH respectively, we obtain
$M \hredn S \idev Q$ and $Q \hredn T \idevn N$ for some terms $S$ and $T$. 
Now we  use Corollary~\ref{lem:crary-8} (many times) on $S \idev Q \hredn T$ to get $S \hredn R \idev T$.

Therefore $M \hredn S \hredn R \idev T \idevn N$ as we desired.
\end{proof}

\bigskip\noindent
Using the previous results, the standardisation theorem admits a very simple proof.

\begin{definition}[Standard reduction sequence]
The standard reduction sequences 
are the sequences of terms $M_1;\ldots;M_n$ which can be generated using the following rules.
\begin{center}
$
\begin{array}{c}
\reglaalta
    {M_2;\ldots;M_k
    \quad
    M_1\hred M_2}
    {M_1;\ldots;M_k}
    {\schemeName{StdHead}}
\hspace{1cm}
\regla
    {M_1;\ldots;M_k}
    {(\l p.M_1);\ldots;(\l p.M_k)}
    {\schemeName{StdAbs}}
\\
\\
\regla
    {M_1;\ldots;M_j
    \quad
     N_1;\ldots;N_k}
    {(M_1\,N_1);\ldots (M_j\,N_1);(M_j\,N_2);\ldots;(M_j\,N_k)}
    {\schemeName{StdApp}}
\hspace{1cm}
\regla
    {}
    {x}
    {\schemeName{StdVar}}
\end{array}
$
\end{center}
\end{definition}

\begin{theorem}[Standardisation] 
\label{lem:crary-12} \abajo
Let $M, N$ be terms such that $M \devn N$. Then there exists a standard reduction sequence $M;\ldots;N$.
\end{theorem}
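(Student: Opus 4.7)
The plan is to apply Bifurcation (\reflem{crary-9}) to split $M \devn N$ as $M \hredn R \idevn N$ for some term $R$, then exhibit a standard reduction sequence $R;\ldots;N$ generated from the internal development part, and finally prepend the head steps $M \hredn R$ via successive applications of \schemeName{StdHead}. The construction thus reduces to showing that any internal multi-step development $R \idevn N$ admits a standard reduction sequence from $R$ to $N$.

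I would establish this auxiliary statement by structural induction on $R$, analysing which rules of $\idev$ can occur along the sequence. If $R$ is a variable or a constant, only \IRefl\ can apply, so $N = R$ and a trivial one-term sequence (using \schemeName{StdVar} in the variable case) suffices. If $R = \l p.R_1$, only \IRefl\ and \IAbs\ are available, yielding $N = \l p.N_1$ with $R_1 \devn N_1$; invoking the theorem recursively on the strictly smaller term $R_1$ and applying \schemeName{StdAbs} completes this case. If $R = R_1 R_2$, I would split on whether $R_1$ is an abstraction. When it is not, the applicable rules at each step are \IRefl\ and \IAppL\ (the head constructor of $R_1$ being preserved along the sequence), so $N = N_1 N_2$ with $R_1 \idevn N_1$ and $R_2 \devn N_2$; when $R_1 = \l p.R_{11}$, only \IRefl\ and \IAppR\ apply, so $N = (\l p.N_{11}) N_2$ with $R_{11} \devn N_{11}$ and $R_2$ related to $N_2$ by a chain of $\idev_p$ steps. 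In both sub-cases the main theorem applied recursively to proper subterms of $R$ (using the observation right after Definition~\ref{dfn:idev} that both $\idev$ and $\idev_p$ are contained in $\dev$), combined with \schemeName{StdApp} (and \schemeName{StdAbs} in the second case), yields the sequence.

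The main obstacle I foresee is setting up the induction cleanly. Head reduction can enlarge terms through substitution, so $R$ is not in general structurally smaller than $M$ and a direct structural induction on $M$ does not close. I would therefore phrase the internal-development fragment as a separate auxiliary lemma, proved by structural induction on $R$ and invoking the main theorem only on proper subterms of $R$. The theorem itself then follows from Bifurcation combined with this auxiliary lemma, with no circularity, since the auxiliary lemma's recursive calls are on strictly smaller terms and the main theorem only calls the auxiliary lemma on the same $R$ obtained from Bifurcation.
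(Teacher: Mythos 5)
Your overall route is the paper's: apply Bifurcation (\reflem{crary-9}) to get $M \hredn R \idevn N$, build a standard reduction sequence from $R$ to $N$ by recursion through the term structure, and prepend the head steps with \schemeName{StdHead}. The case analysis on the shape of the internal development (variable/constant, abstraction via \IAbs, application split on whether the head is an abstraction, using $\idev\,\subseteq\,\dev$ and $\idev_p\,\subseteq\,\dev$ to fall back to the main theorem on the components) is also essentially what the paper does, if more explicitly.

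However, your induction measure does not close, and your claim of ``no circularity'' is incorrect. You propose structural induction on the \emph{source} $R$ for the auxiliary lemma, with the main theorem invoked on proper subterms $R_i$ of $R$. But the main theorem, when invoked on $R_i \devn N_i$, re-applies Bifurcation to obtain $R_i \hredn R_i' \idevn N_i$ and then must call the auxiliary lemma on $R_i'$ --- and $R_i'$ is \emph{not} ``the same $R$'': it is the result of head-reducing $R_i$, which (as you yourself observe when rejecting induction on $M$) can be arbitrarily larger than $R_i$, indeed larger than $R$. So along the call chain $\textnormal{Aux}(R) \to \textnormal{Main}(R_i) \to \textnormal{Aux}(R_i') \to \cdots$ the size of the first argument need not decrease, and the structural induction on $R$ is not well-founded. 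What does decrease is the \emph{target}: the auxiliary lemma recurses on proper subterms $N_i$ of $N$, and Bifurcation leaves the target unchanged. This is exactly why the paper phrases the whole argument as a single induction on $N$ (equivalently, a lexicographic measure on the size of the target, with the Main-to-Aux call not increasing it). Your proof is repaired simply by switching the induction variable from $R$ to $N$; as written, the termination argument is wrong.
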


\begin{proof}
By \reflem{crary-9} we have $M \hredn R \idevn
N$; we observe that it is enough to obtain a standard reduction
sequence $R;\ldots;N$, because we subsequently apply
\schemeName{StdHead} many times.

Now we proceed by induction on $N$

\begin{itemize}
	\item $N \in Var$; 
	in this case $R = N$ and we are done.
	
	\item $N = \l p.N_1$; in this case $R = \l p.R_1$ and $R_1
          \devn N_1$.  By IH we obtain a standard reduction sequence $R_1;\ldots;N_1$, then
          by \schemeName{StdAbs} so is $R = \l p.R_1;\ldots;\l p.N_1 =
          N$.
	
	\item $N = N_1 N_2$, so $R = R_1 R_2$ and $N_i \devn R_i$. We use the IH on both reductions to get two standard reduction sequences $N_i;\ldots;R_i$, then we join them using \schemeName{StdApp}.
\end{itemize}
\end{proof}

\newsection{Conclusion and further work}
\label{s:conclusion}
We have presented an elegant proof of
the Standardisation Theorem for constructor-based pattern calculi.

We aim to generalize both the concept of standard reduction
and the structure of the Standardisation Theorem proof presented here
to a large class of pattern calculi, including both open and closed
variants as the Pure Pattern Calculus~\cite{JK05}.  It would be
interesting to have sufficient conditions for a pattern calculus to
enjoy the standardisation property. This will be close in spirit with
\cite{jk-fcp} where an abstract confluence proof for pattern calculi
is developed.

The kind of calculi we want to deal with imposes challenges 
that are currently not
handled in the present contribution, such as open patterns, reducible
(dynamic) 
patterns, and the possibility of having \texttt{fail} as a decided
result of matching.
Furthermore, the possibility of decided \texttt{fail} combined with
compound patterns leads to the convenience of studying forms of
\textit{inherently parallel} standard reduction strategies.

The 
abstract axiomatic Standardisation  Theorem   developed in~\cite{GLM92} 
could be useful for our purpose.   However, while the
axioms of  the abstract formulation of standardisation  are assumed to
hold  in the  proof of  the standardisation  result, they  need  to be
defined and verified  for each  language to  be standardised.   This could  be 
nontrivial,  as   in  the  case  of   TRS~\cite{HL91,Terese03},  where  a
meta-level  matching operation is  involved in  the definition  of the
rewriting framework. We leave this topic as further work.

\vspace{-4mm}

\bibliography{tesisd-resumido}
\bibliographystyle{eptcs} 

\end{document}